\makeatletter \@addtoreset{equation}{section} \makeatother
\newtheorem{theorem}{Theorem}[section]
\newtheorem{lemma}{Lemma}[section]
\newtheorem{corollary}{Corollary}[section]
\newtheorem{proposition}{Proposition}[section]
\newcommand{\mdet}{\mathrm{det}}
\newcommand{\intd}{\displaystyle\int}
\newcommand{\Tr}{\mathrm{Tr}\,}
\newcommand{\Str }{\mathrm{Str}\,}
\newcommand{\Sdet}{\mathrm{Sdet}\,}
\newcommand{\E}{\mathbb{E}}
\newcommand{\Do}{D^{\mathrm{o}}}
\newcommand{\conjSupVec}[1]{#1^*}
\newcommand{\herm}{\mathcal{H}}
\newcommand{\diag}{\mathop{\mathrm{diag}}}
\begin{document}
\title{ Universality of the second correlation function of the deformed Ginibre ensemble}
\author{ Ievgenii Afanasiev\thanks{B. Verkin Institute for Low Temperature Physics and Engineering of the National Academy of Sciences of Ukraine, Kharkiv, Ukraine;
e-mail: afanasiev@ilt.kharkov.ua} \and
Mariya Shcherbina
\thanks{B. Verkin Institute for Low Temperature Physics and Engineering of the National Academy of Sciences of Ukraine, Kharkiv, Ukraine\,\&\,Institute for Advanced Study, Princeton, USA; e-mail: shcherbi@ilt.kharkov.ua} \and
 Tatyana Shcherbina
\thanks{ Department of Mathematics, University of Wisconsin - Madison, USA, e-mail: tshcherbyna@wisc.edu. This material is based upon work supported  in part by Alfred P. Sloan Foundation grant FG-2022-18916 and the National Science Foundation under grant DMS-2346379}}

\date{}
\maketitle

\begin{abstract}
We study the deformed complex Ginibre ensemble $H=A_0+H_0$, where $H_0$ is  the complex  matrix with iid Gaussian entries, and $A_0$ is some general $n\times n$ matrix 
(it can be random and in this case it is independent of $H_0$). Assuming rather general assumptions on $A_0$, we prove that the asymptotic local behavior of the second correlation function of the eigenvalues of such matrices in the bulk coincides with that for the pure complex Ginibre ensemble.

\end{abstract}

\section{Introduction}
Consider the complex deformed Ginibre ensemble, i.e. $n\times n$ matrices
 \begin{equation}\label{defG}
H=A_0+H_0,
\end{equation}
where $H_0$ is a complex Ginibre matrix with i.i.d. complex Gaussian entries $\{h_{ij}^{(0)}\}_{i,j=1}^n$ such that
\[
\E[h_{ij}^{(0)}]=0, \quad \E[|h_{ij}^{(0)}|^2]=1/n,\quad \E[(h_{ij}^{(0)})^2]=0.
\]
The deformation matrix $A_0$ is a general $n\times n$ matrix with complex entries (which can be random, and
in this case is independent of $H_0$). Such matrices has a lot of applications, in particular, in computational mathematics and communication theory.

The ensemble (\ref{defG}) and its more general case with non-Gaussian $H_0$ is extensively studied in mathematical literature. In particular, it is known from \cite{TVKr:10} that under the reasonable general assumption on the distribution of $A_0$  and for more general $H_0$ with iid but non-Gaussian entries, there exists a limiting spectral distribution $\mu$ of the eigenvalues of $H$.  If $A_0=0$, then
$\mu$ is the celebrated {\it circular law}, a uniform distribution on a unit disk in the complex plain.
But for $A_0\ne 0$, the limiting distribution is not necessary the circular law and  it hardly depends on $A_0$. The exact form of the support $D$ of the measure $\mu$ is not simple to describe, however, 
according to \cite{BoCap:14} (see also the review \cite{BoCh:12} for the history of the problem and  \cite{Z:22} and references therein for the more general case), under rather general conditions on $A_0$
it takes the nice form:
 \begin{align}\label{dD}
  D=\Big\{z: \int \lambda^{-1}d\nu_z(\lambda)\ge 1\Big\},
 \end{align}
 where $\nu_z$ is the limiting normalized eigenvalue counting measure of $ Y_0(z)=(A_0-z)(A_0-z)^*$.
We refer to the interior points inside $D$ as the {\it bulk} of the spectrum (and denote it as $\Do$), and to the points on the boundary $\partial D$ as the {\it edge} of the spectrum. 

The behaviour of individual eigenvalues, or the local eigenvalue statistics, is expected to be more universal, i.e. it does not depend on $A_0$. 
This is a part of a general Wigner-Dyson-Mehta universality conjecture 
stating that the behavior of local statistics of eigenvalues of random matrices is determined by the symmetry type of ensemble and largely does
not depend on the entries distribution or even their independence. The conjecture is going back to
the Wigner’s pioneering idea to model spectral statistics of complex quantum systems by those of simple random matrix ensembles
that respect the basic symmetries but otherwise may not be related at all to the initial quantum Hamiltonian.
The original conjecture concerned mostly the Hermitian and symmetric ensembles of random matrices and starting from the breakthrough
works of Erd$\ddot{\hbox{o}}$s, Yau, Schlein with co-authors (see \cite{ErY:book} and reference therein)  and Tao and Vu (see, e.g., \cite{TV:11})
is proved now for most classical random graphs and random matrix models. 

However, for the non-Hermitian random matrices with two-dimensional spectral distribution the local spectral statistics of eigenvalues is much less studied.
One of the main reasons why the non-Hermitian
spectral analysis is so difficult is because, unlike in the Hermitian case, the resolvent
$(H-z)^{-1}$ of a non-normal matrix is not effective to study eigenvalues near $z$ due to the instability of the non-Hermitian eigenvalues under perturbations.

The main useful tool to deal with  eigenvalues of  non-Hermitian matrices is Girko's logarithmic potential approach \cite{Gir:84}
based on the formula  expressing linear statistics of eigenvalues of $H$ in
terms of the log-determinant of the symmetrized matrix 
\begin{align}\label{Y}
Y(z)=(H-z)(H-z)^*.
\end{align}
For example, the linear eigenvalue statistics for any smooth, compactly supported test function $f$ can be expressed as
\begin{align}\label{Gir}
N_n[f]:=\sum\limits_{j=1}^n f(z_j)=
\frac{1}{4\pi }\int_{\mathbb{C}}\Delta_z f(z)\cdot \log|\mdet \,Y(z)| d^2z,
\end{align}
where $\Delta_{z}:=\frac{\partial^2}{\partial z\partial \bar z}$. Then, the $k$th  correlation function  $p_k(z_1,\dots z_k)$  can be recovered  as
\begin{align*}
p_k(z_1,\dots, z_k)=(4\pi n)^{-k}\Delta_{z_1}\dots\Delta_{z_k}\E\{\log\det Y(z_1)\dots\log\det Y(z_k)\}
\end{align*}

\textit{Bulk} universality conjecture of the local eigenvalue statistics states
that for any $z_0\in \Do$ uniformly in $\{\zeta_j\}_{j=1}^k$ varying in any compact set $K\subset \mathbb{C}$  we have
\begin{align}\label{Un_bulk}
\lim\limits_{n\to\infty}\rho^{-k}p_k(z_0+\zeta_1/(\rho n)^{-1/2},\dots,z_0+\zeta_k/(\rho n)^{-1/2})=\det \{K^{(b)}(\zeta_i,\zeta_j)\}_{i,j=1}^k,
\end{align}
where $\rho$ is a limiting density $p_1(z_0)$ and
\begin{equation*}
K^{(b)}(\zeta_1,\zeta_2)=\frac{1}{\pi}e^{-|\zeta_1|^2/2-|\zeta_2|^2/2+\zeta_1\bar \zeta_2}.
\end{equation*}
This means that the limit coincides with that obtained for the pure Ginibre ensemble. Similar statement (but with a different kernel $K^{(e)}(\zeta_1,\zeta_2)$) can be formulated 
for $z_0\in \partial D$ (so-called \textit{edge} universality).

The key point of Girko's logarithmic potential approach is that $Y(z)$ is now a Hermitian matrix and all tools and results developed for
Hermitian ensembles in the last years are available. The problem is that Girko's formula is much harder to analyse than the analogous expression
for the Hermitian case. In particular, it requires a good lower bound on the smallest singular value $\sigma_1(H-z)$
of $H-z$, a notorious difficulty which was already crucial in the proof of circular law and its more general analog \cite{TVKr:10}.
However, for the study of the asymptotic behavior of {\it individual } eigenvalues  much more precise control of $\sigma_1(H-z)$ is required.

Such control is hardly accessible for the standard random matrix techniques (see, e.g., the discussion in \cite{CiErS:ed} for details).
For the case of matrices with iid entries, after the long series of results,
bulk and edge universality for a general distribution  were obtained in the classical work of Tao and Vu \cite{TV:15} (see also reference therein for the history of the problem) under the assumption that  the common distribution 
of the entries is good enough (e.g., has all bounded moments), and its first four moments
match the first four moments of the standard Gaussian distribution.  The result holds also for the case of matrices with iid real entries.

The removing of this four moment matching condition happened very recently and required combination of the well-developed random
matrix techniques with the supersymmetry (SUSY) approach based on the representation of the determinant as an integral (formal) over the Grassmann variables. 
Combining this representation with the representation of an inverse determinant as  integral over the Gaussian complex field,
SUSY allows to obtain an integral representation for the main spectral characteristics such as averaged density of states, characteristic polynomials, 
correlation functions, as well as for elements of the resolvent moments, etc. 
Although the rigorous control of such SUSY integral representations can be difficult, 
SUSY approach is widely  used in the physics literature  (see e.g. reviews \cite{Ef}, \cite{M:00}) and was successfully applied rigorously to
the study of some Hermitian and non-Hermitian random matrices (see, e.g.,  \cite{FyoSom:96}, \cite{Fyo:18}, \cite{TSh:ChW}, \cite{Af:19}, \cite{SS:Un}, \cite{AfS:23}).

Using integral representation obtained by SUSY, the optimal control of $\sigma_1(H-z)$ for the Ginibre ensemble at the edge of the spectrum was achieved in \cite{CiErS:20} (the result for $A_0\ne 0$ was obtained in \cite{SS:Gin}). 
The  result of \cite{CiErS:20} was used in the subsequent work \cite{CiErS:ed}  to remove the four moment matching condition of \cite{TV:15} for the edge universality.
Very recently, in the work of Maltsev and Osman \cite{MalOs:23}, SUSY together with a partial Schur decomposition was applied to obtain
the bulk universality for the matrices (\ref{defG}) with a small Gaussian perturbation $\sqrt{t} H_0$ and matrices $A_0$  with a very good control of
the resolvent of $(A_0-z)(A_0-z)^*+\varepsilon^2$ (basically, they need $A_0$ to satisfy the two-resolvent local law,  see \cite{AEK:21}, \cite{CiErS:CLT_mes}).
Similar result  was obtained in \cite{Zhang:24}  for diagonal $A_0$  with only a finite number of different parameters $\{a_j\}$ on its diagonal. The result of \cite{MalOs:23} combined with the local law allowed to
remove the four moment matching condition of \cite{TV:15} for the bulk universality. The matrices with iid real entries were similarly treated recently in \cite{Os:24}, \cite{DubY:24}.

The main aim of the current paper is to prove bulk universality (\ref{Un_bulk})  of the second correlation function $p_2(z_1,z_2)$ for the ensemble (\ref{defG}) with a rather general deformation $A_0$  which is not restricted to be normal, to satisfy the local law, or to have a limiting spectrum in the unit disc. 
We would like to mention that we expect that our proof can be transposed directly to the case of $k$th correlation function if we replace $2\times 2$ matrices by
$k\times k$ in the argument.
The only exception is technical Lemma \ref{l:1} (see Section \ref{s:3}), where the matrix dimension is used essentially.

To get the integral representation of $p_2(z_1,z_2)$, we apply SUSY. Following \cite{FyoSom:96},  instead of the $\E\{\log\det Y(z_1)\log\det Y(z_2)\}$ with $Y(z)$ of  (\ref{Y}) we consider the following generating functional
\begin{align}\label{Z}
&\mathcal{Z}(\zeta,\zeta',\hat\varepsilon, \hat\varepsilon') = \E\biggl\lbrace \prod\limits_{j = 1}^{2} \frac{\det(Y(z_j) + \varepsilon_j^2)}{\det(Y(z'_j) + \varepsilon'^2)} \biggr\rbrace,\quad z_j=z_0+\zeta_j/\sqrt n,\quad z_j'=z_0+\zeta_j'/\sqrt n,\\
&\zeta=\mathrm{diag}\,\{\zeta_1,\zeta_2\},\quad \zeta'=\mathrm{diag}\,\{\zeta_1',\zeta_2'\}\quad \hat\varepsilon=\mathrm{diag}\,\{\varepsilon_1,\varepsilon_2\},\quad \hat\varepsilon'=\varepsilon' I_2.\notag
\end{align}
Here we choose $z_1,z_2$ in the $n^{-1/2}$-neighbourhood of some point $z_0\in D$.
It is easy to check that
\begin{align}\label{der}
&p_2(z_0+\zeta_j/\sqrt n,z_0+\zeta_j'/\sqrt n)\\
&\hskip2cm =\lim_{\epsilon\to 0} (4\pi )^{-2}\partial_1  \partial_2\frac{\partial}{\partial\bar\zeta_1}\frac{\partial}{\partial\bar\zeta_2}
\mathcal{Z}(\zeta,\zeta',n^{-1}\epsilon I_2, n^{-1}\epsilon' I_2)
 \Big|_{\zeta_1=\zeta_1',\bar\zeta_1=\bar\zeta_1',\hat\epsilon=\hat\epsilon'=
 \epsilon I_2}, \notag
 \end{align}
where in order to simplify formulas we use notations:
\begin{align}\label{pa_1,2}
& \partial_1 =\frac{\partial}{\partial\zeta_1}+\frac{\partial}{\partial\zeta_1'},\quad 
 \partial_2 =\frac{\partial}{\partial\zeta_2}+\frac{\partial}{\partial\zeta_2'}.
\end{align}

Now suppose $A_0$ satisfies the following conditions

\textit{Assumptions (A1)-(A3)}:
\begin{enumerate}
\item[(A1)] There are some $M,d>0$ such that
\begin{align*}
\mathrm{Prob}\Big\{n^{-1}\sum_{i,j=1}^n|A_{0,ij}|^2<M\Big\}\ge 1- n^{-1-d}.
\end{align*}
\item[(A2)] 
For almost all $z$ normalized counting measure $\nu_{z,n}$ of eigenvalues  of the matrix\\ $Y_0(z):=(A_0-z)(A_0-z)^*$ converges, as $n\to\infty$, to some limiting
measure $\nu_z$;

\item[(A3)]  For any $z\in \Do$ (see (\ref{dD})) there are $d_1(z)>0,\varepsilon(z)>0$ such that

\begin{align*}  
\mathrm{Prob}\Big\{n^{-1}\Tr \big(Y_0(z)+\varepsilon^2(z)\big)^{-1}>1+d_1(z)\Big\}>1-C'_\epsilon n^{-1-d}.
\end{align*}
\end{enumerate}
Notice that in the case when $A_0$ is non-random,  assumptions (A1)-(A3) mean that starting from some $n$ the inequalities
in (A1)-(A3) (which in the random case we want to have with  probability higher than $1-n^{-1-d}$) are valid.

According to \cite{TVKr:10}, the assumptions (A1)-(A2) guarantee that there exists a limiting distribution $\mu$ of the eigenvalues of (\ref{defG}).   In addition, for almost all $z\in\mathbb{C}$ there exists
a non-random probability measure $\eta_z$ on $\mathbb{R}$ which is a limit of the normalized counting measure of eigenvalues of $Y(z)$
defined in (\ref{Y}) (see \cite{DSil:07}). 

To formulate the main result of the paper we need a few notations.
Denote by $u_*$ a positive solution of the equation
\begin{align}\label{eq_u}
 & \int(u_*^2+\lambda^2 )^{-1}d\nu_{n,z}(\lambda^2)=1\,\Longleftrightarrow  n^{-1} \Tr G=1,\\
&G:=(A_zA_z^*+u_*^2)^{-1},\quad A_z=A_0-z_0.
\label{G}\end{align}
Notice that this solution is unique and the condition (A3) guarantees that $u_*>\varepsilon(z)>0$ with high probability.

We introduce also notations for scalar characteristics of the matrix $A_z$ which appear in our consideration below:
\begin{align}\notag
g_2=&n^{-1}\Tr G^2,\quad k_A=n^{-1}\Tr A_zG ,\quad h_A=n^{-1}\Tr A_zG^2,\quad f_A=n^{-1}\Tr (A_zG)^2,\\
G_*=&(A_z^*A_z+u_*^2)^{-1}, \quad \rho=n^{-1}u_*^2\Tr GG_*+g_2^{-1}|h_A|^2.
\label{rho}\end{align}

\begin{theorem}\label{t:1} Let $A_0$ satisfies assumptions  (A1)-(A3) and $z_0\in \Do$ (see (\ref{dD})). Then  for any $\tilde M>0$ uniformly in 
$|\zeta_{1}|<\tilde M$, $|\zeta_{2}|<\tilde M$ we have 
\begin{align}\label{t1.1}
&p_2(z_0+\zeta_1/n^{1/2},z_0+\zeta_2/n^{1/2})\stackrel{w}{\longrightarrow}
 \rho^2(1-e^{-\rho|\zeta_1-\zeta_2|^2}),
\end{align}
where $\mathcal{Z}$ and $\rho$ are defined by (\ref{Z})  and (\ref{rho}) respectively.
\end{theorem}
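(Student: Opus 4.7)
The plan is to follow Girko's logarithmic potential formula (\ref{der}) and reduce the computation of $p_2$ to an asymptotic analysis of the generating functional $\mathcal{Z}(\zeta,\zeta',\hat\varepsilon,\hat\varepsilon')$. My first step is to derive an exact SUSY integral representation of $\mathcal{Z}$. Writing each $\det(Y(z_j)+\varepsilon_j^2)$ in the numerator as a Gaussian Grassmann integral, and each $\det(Y(z_j')+\varepsilon'^2)^{-1}$ in the denominator as a Gaussian integral over commuting complex vectors, the expectation over $H_0$ can be performed explicitly because $H_0$ is Gaussian. The result is an integral whose effective action depends on quadratic composite fields that can be arranged into $2\times 2$ block supermatrices.

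Next I would apply a Hubbard--Stratonovich type decoupling to linearize the quartic terms in these composite fields, introducing two $2\times 2$ Hermitian matrix variables (one commuting for the fermionic sector and one for the bosonic sector, with appropriate Grassmann off-diagonal blocks). After this step the original Gaussian and Grassmann fields are integrated out, leaving a finite-dimensional superintegral over the auxiliary matrices whose action contains a superlogarithm of a determinant built from $A_z$, the auxiliary variables, and $\hat\varepsilon,\hat\varepsilon'$. The quantities $g_2,k_A,h_A,f_A,\rho$ in (\ref{rho}) should appear naturally upon expanding this action near the saddle.

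Then I would carry out the asymptotic analysis of this finite-dimensional superintegral by the steepest descent method. The saddle-point equation in the commuting block reduces precisely to (\ref{eq_u}), singling out $u_*>0$; assumption (A3) ensures that the saddle is separated from the spectrum of $Y_0(z)$ and that the Hessian is non-degenerate in the massive directions, while (A1)--(A2) provide the usual quantitative bounds. The crucial feature is that the symmetry rotating the bosonic and fermionic sectors produces a residual saddle \emph{manifold} (a Goldstone-type mode) on which the $\zeta$-dependence concentrates. The integral over this residual manifold reproduces the Ginibre kernel: the $\zeta_j$ enter only through a linear shift of the action at the saddle, and the Gaussian integration over the zero mode yields the factor $e^{-\rho|\zeta_1-\zeta_2|^2}$, with the prefactor $\rho^2$ emerging after applying the derivative operations of (\ref{der}) and passing to $\epsilon\to 0$.

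The main obstacle will be controlling the integrand outside a small neighbourhood of the saddle point. Since $A_0$ is not assumed to be normal, the bosonic action is not manifestly convex on the noncompact integration manifold, and positivity of its real part must be extracted using only (A1)--(A3); this is presumably the role of the technical Lemma \ref{l:1} mentioned in the introduction as being specific to the $k=2$ case. A secondary difficulty is the careful computation of the Hessian in the massive directions: its determinant restricted to the saddle manifold is what produces the correct prefactor in (\ref{t1.1}), so all four scalar invariants $g_2,k_A,h_A,f_A$ of $A_z$ must be tracked to recognize the combination $\rho$ from (\ref{rho}). Handling the Grassmann integration over the fermionic auxiliary block so that the cancellations between bosonic and fermionic Gaussian contributions produce exactly the Ginibre answer will also require some care.
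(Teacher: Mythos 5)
Your skeleton matches the paper's up to and including the localization step: Girko's formula (\ref{der}) applied to the generating functional (\ref{Z}), an exact SUSY representation (Grassmann integrals for the determinants, complex Gaussian integrals for the inverse determinants, explicit average over the Gaussian $H_0$, Hubbard--Stratonovich/Fourier decoupling into $2\times2$ auxiliary matrices), and a steepest-descent localization at the saddle determined by (\ref{eq_u}), with (A3) keeping $u_*$ bounded away from zero; your guess about the role of the technical Lemma \ref{l:1} (controlling the non-convex bosonic directions, namely the $S$-variable, away from the saddle for non-normal $A_0$) is also essentially right.

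The genuine gap is in your endgame. After localization the representation still carries an explicit factor $n^4$ in front of the integral (see (\ref{repr_Z}); it is what is left of the $n^{12}$ in (\ref{repr_fin}) after the $\hat\varepsilon=\hat\epsilon/n$, $\hat\varepsilon'=\epsilon' I_2/n$ scaling), so the ``Hessian in the massive directions plus Gaussian integration over the zero mode'' that you propose does not determine the limit: Taylor terms of the action up to the \emph{eighth} order in the fluctuation variables (there are 16 of them) contribute at the same order as the quadratic ones, and a computation that stops at the Hessian simply misses them. The paper avoids this expansion altogether by exact nonlinear changes of variables in Section \ref{s:4}: the Grassmann rescaling $\nu_l=\kappa\pm n^{1/2}u_*\chi$, whose Berezinian cancels the $n^4$, the shift of $\Lambda,T$ by $n^{1/2}u_*P(X^2)$ with $P(\lambda)=\sqrt{1+\lambda}-1$ chosen precisely so that the $O(n^{1/2})$ term in $\Delta$ of (\ref{Delta}) vanishes, and a further shift by the supermatrix $\mathcal C$ solving (\ref{eq_C}). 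This produces the representation (\ref{final})--(\ref{Phi_fin}) with no growing prefactor, in which $A_0$ enters only through the two scalars $\rho$ and $u_*^2/g_2$; even then the limit is not evaluated directly but identified by comparison with the $A_0=0$ Ginibre case (same $\Phi$ with $|\tilde z_0|^2=1-g_2u_*^2$, $\tilde\zeta=\rho^{1/2}\zeta$), the constant being fixed by the normalization $\lim_{|\zeta_1-\zeta_2|\to\infty}p_2=\rho^2$. Finally, your plan leaves the $\epsilon\to0$ interchange unaddressed: this requires Proposition \ref{p:logZ} and forces all saddle-point bounds to be uniform in $\epsilon'$ (hence the $\epsilon'^2$ factors carried through the estimates (\ref{b_Omega})). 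So either you supply the eighth-order expansion in 16 variables, or you restructure the integral as the paper does; as written, the proposed conclusion of the argument does not go through.
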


An important ingredient of our proof  is a following proposition, which allows us to express a weak limit of $p_2(z_0+\zeta_1/n^{1/2},z_0+\zeta_2/n^{1/2})$
in terms of the generating functional $\mathcal{Z}(\zeta,\zeta',\frac{\epsilon}{n} I_2, \frac{\epsilon'}{n} I_2)$.

\begin{proposition}\label{p:logZ} For any $z_{l}=z_0+n^{-1/2}\zeta_{l}$, $l=1,2$ with $\{\zeta_{l}\}_{l=1}^2$ varying on the compact set and any $\epsilon_1>0,\epsilon_2>0$
\begin{align}\label{p.logZ}
&\Big|\mathbb{E}\{\log \det Y(z_1)\log \det Y(z_2)\}\\&\hskip2cm-\mathbb{E}\Big\{\log\det\Big( Y(z_1)+\Big(\frac{\epsilon_1}{n}\Big)^2\Big)
\log\det\Big ( Y(z_2)+\Big(\frac{\epsilon_2}{n}\Big)^2\Big)\Big\}\Big|\le C\epsilon_1\epsilon_2.
\notag\end{align}
\end{proposition}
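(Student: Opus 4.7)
The plan is to use the elementary identity
\[\log\det(Y+a^2) - \log\det Y = \int_0^{a^2} \Tr (Y + t)^{-1}\,dt,\]
applied to $Y = Y(z_j)$ with $a = \epsilon_j/n$, in order to reduce the comparison between the two expectations in \eqref{p.logZ} to estimates on traces of resolvents of $Y(z_j)$.

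Concretely, abbreviating $A_j := \log\det Y(z_j)$, $B_j := \log\det(Y(z_j) + (\epsilon_j/n)^2)$, and $U_j := B_j - A_j = \int_0^{(\epsilon_j/n)^2} \Tr (Y(z_j)+t)^{-1}\,dt \ge 0$, the bilinear decomposition
\[B_1 B_2 - A_1 A_2 = A_1 U_2 + A_2 U_1 + U_1 U_2\]
reduces the problem to the three bounds $|\E[A_j U_k]|, |\E[U_1 U_2]| \le C\epsilon_1\epsilon_2$. After the change of variables $t = (u/n)^2$ one has
\[U_j = \frac{2}{n^2}\int_0^{\epsilon_j} u\,\Tr(Y(z_j)+(u/n)^2)^{-1}\,du,\]
and the explicit factor $u\,du/n^2$ is exactly what should produce the $\epsilon_j\epsilon_k$ scaling claimed in \eqref{p.logZ}, provided the integrands involving resolvent traces can be controlled at the correct order.

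The hard part is therefore the uniform upper bounds
\[\E\bigl[|A_j|\cdot\Tr(Y(z_k)+(u/n)^2)^{-1}\bigr] \le C n, \qquad \E\bigl[\Tr(Y(z_1)+s)^{-1}\Tr(Y(z_2)+t)^{-1}\bigr] \le C n^2,\]
valid uniformly in $s,t \ge 0$ and for $\zeta_l$ ranging over the given compact set. These are nontrivial because, when $s$ (or $t$) is of order $(\epsilon/n)^2$, the trace $\Tr(Y(z)+t)^{-1}$ is dominated by the smallest singular value of $H-z$, which for $z$ in the bulk is only of order $n^{-1}$. To derive them I would combine (i) a Wegner-type tail bound of the form $\p(\sigma_{\min}(H-z) \le \tau/n) \le C\tau^2$, which is available thanks to the Gaussian character of $H_0$ in \eqref{defG} together with Assumption~(A3) placing $z_0$ strictly inside $\Do$, with (ii) a Gaussian integration-by-parts in the entries of $H_0$ (in the spirit of Fyodorov--Sommers) to decouple the expectation of the product $\log\det Y(z_j)\cdot\Tr(Y(z_k)+t)^{-1}$ into manageable pieces. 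Once these resolvent-trace estimates are in hand, feeding them back into the bilinear decomposition and integrating in $u$ produces the claimed bound $C\epsilon_1\epsilon_2$.
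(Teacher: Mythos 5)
Your route diverges from the paper's, and the quantitative claims it rests on are where it breaks down. In the bulk the eigenvalues of $Y(z)$ accumulate at the origin like $\lambda^{-1/2}$ and the smallest one is typically of order $n^{-2}$, so $\Tr(Y(z)+t)^{-1}$ is typically of order $n^{2}$ (not $n$) for $t\lesssim n^{-2}$. Consequently your claimed bounds $\E\bigl[\Tr(Y(z_1)+s)^{-1}\Tr(Y(z_2)+t)^{-1}\bigr]\le Cn^{2}$ and $\E\bigl[|A_j|\,\Tr(Y(z_k)+(u/n)^2)^{-1}\bigr]\le Cn$ are false by two powers of $n$; what your change of variables actually requires is the borderline estimate $\E\bigl\{\bigl(\Tr(Y(z)+(u/n)^2)^{-1}\bigr)^{2}\bigr\}\le Cn^{4}/u^{2}$ uniformly in $u$, and this is precisely the hard core of Proposition \ref{p:logZ}. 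It does not follow from a tail bound on $\sigma_{\min}(H-z)$ alone (the squared trace involves all small singular values pairwise), and the optimal bound $\p\{\sigma_{\min}(H-z)\le \tau/n\}\le C\tau^{2}$ for the general deformed ensemble under (A1)--(A3) is itself a nontrivial theorem (cf.\ \cite{SS:Gin}), not an off-the-shelf consequence of Gaussianity plus (A3); the proposed Gaussian integration by parts to ``decouple'' $\log\det\cdot\Tr$ is a hope, not an argument. The paper obtains the key estimate by a completely different mechanism: it identifies $\frac{u^{2}}{n^{4}}\E\{(\Tr(Y+(u/n)^2)^{-1})^{2}\}$ with $\partial^{2}_{\epsilon_1\epsilon_2}\mathcal{Z}(0,0,\hat\epsilon/n,\epsilon I_2/n)$ and bounds the latter by a constant, uniformly in $\epsilon$, from the SUSY representation (\ref{final}): the $\hat\epsilon$-dependence sits only in $E_{*1}$ of (\ref{E_*1,2}), integration by parts over $U(2)$ produces an explicit factor $\epsilon_1\epsilon_2$, and this compensates the $(\epsilon')^{-2}$ divergence of the $R_2$-integral of $E_{*2}$. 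Without this (or an equivalent) input your scheme cannot close.

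There is also a structural problem with your bilinear decomposition. The paper's ``trivial formula'' is the double-increment identity: it writes the quantity as $\int_0^{\epsilon_1}\!\!\int_0^{\epsilon_2}\partial_s\partial_t\,\E\{\log\det(Y(z_1)+(s/n)^2)\log\det(Y(z_2)+(t/n)^2)\}\,ds\,dt$, i.e.\ it only has to control $\E[U_1U_2]$ in your notation, which after Cauchy--Schwarz reduces to the diagonal second moment above and then integrates to $C\epsilon_1\epsilon_2$. Your decomposition additionally produces the cross terms $\E[A_1U_2]$ and $\E[A_2U_1]$, and these are genuinely not $O(\epsilon_1\epsilon_2)$: since $\E[U_k]\asymp \epsilon_k^{2}$ (each of the $O(1)$ smallest eigenvalues, of size $\sim (i/n)^2$, is shifted by $(\epsilon_k/n)^2$) while $\E[A_j]\asymp n$, the cross terms are of order $n\epsilon_k^{2}$, and they only become harmless in the weak formulation used in the proof of Theorem \ref{t:1}, after the $\zeta$-integrations by parts (there $A_j$ depends on $\zeta_j$ only). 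So even granting resolvent estimates of the correct order, the plan of bounding each of the three terms by $C\epsilon_1\epsilon_2$ fails for two of them; you should work with the double increment, as the paper does, rather than with $\E[B_1B_2]-\E[A_1A_2]$ term by term.
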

The proof of the proposition is given in  Section \ref{s:5}.

We prove Theorem \ref{t:1} in  a few steps. First, in Section \ref{s:2} we derive the integral representation for 
$\mathcal{Z}(\zeta,\zeta',\hat\varepsilon, \hat\varepsilon')$ using SUSY approach. 
For the reader convenience, the brief outline of  the basic formulas of SUSY techniques is given in Section 2.1.  The obtained integral representation
(see (\ref{repr_fin})) contains a large parameter $n$  at the exponent, hence we analyse it using a saddle-point method. Section \ref{s:3} is devoted
to the proof of the fact that only a small neighbourhood of a saddle-point contributes to the integral. Unfortunately, the formula  obtained in Section \ref{s:3}
after the saddle-point analysis (see (\ref{repr_Z})) still contains an additional multiplier $n^4$ in front of the integral. This suggests that non-zero contribution to the integral is given by the $8$-th order of the Taylor expansions at the neighbourhood of the saddle-point of all functions under the integral.
Since our functions depend on 16 variables, the corresponding expansion looks  too complicated, and we prefer to use a different method.
In Section \ref{s:4} we do a number of changes  in Grassmann and scalar variables which allow us to transform (\ref{repr_Z}) into an integral
which does not have additional factor $n^\alpha$ in front of it. In addition, the obtained integral  (see (\ref{final})) depends
on $A_0$ only through 2 scalar parameters $\rho$, $u_*^2/g_2$ (see (\ref{rho})). Since the same representation is valid for $A_0=0$ with a
different $|z_0'|<1$ and $\zeta$ (see the argument at the end of Section \ref{s:4.1}), we conclude that the result of the limiting and differentiation
procedure (\ref{der}) is universal (modulo two above scalar parameters) for any $A_0$ satisfying assumption (A1)-(A3).
 Combined with Proposition \ref{p:logZ}, this argument completes the proof of Theorem \ref{t:1}.

%

\section{Integral representation}\label{s:2}
In this section we derive an integral representation for $\mathcal{Z}(\zeta,\zeta',\hat\varepsilon, \hat\varepsilon')$ using supersymmetry techniques. 
For the reader convenience, the brief outline of the basic formulas of SUSY techniques is given in Section 2.1. 
More detailed information of the techniques and its applications  can be found, e.g.,
in \cite{M:00},  \cite{Ef}.
\subsection{SUSY techniques: basic formulas}
Consider two sets of formal variables
$\{\psi_j\}_{j=1}^n,\{\overline{\psi}_j\}_{j=1}^n$ satisfying the anticommutation relations:
\begin{equation}\label{anticom}
\psi_j\psi_k+\psi_k\psi_j=\overline{\psi}_j\psi_k+\psi_k\overline{\psi}_j=\overline{\psi}_j\overline{\psi}_k+
\overline{\psi}_k\overline{\psi}_j=0,\quad j,k=1,\ldots,n.
\end{equation}
These two sets of variables $\{\psi_j\}_{j=1}^n$ and $\{\overline{\psi}_j\}_{j=1}^n$ generate the Grassmann
algebra $\mathfrak{A}$. Taking into account that anticommutation relations imply $\psi_j^2=\bar\psi_j^2=0$,  all elements of $\mathfrak{A}$
are polynomials of $\{\psi_j\}_{j=1}^n$ and $\{\overline{\psi}_j\}_{j=1}^n$ of degree at most one
in each variable. If such polynomial has only monomials of even power, then it is clearly a commuting element of $\mathfrak{A}$, and we 
call such elements \textit{even}. If the polynomial has only monomials of odd power, we call it an \textit{odd} element (such elements anticommutes with each other).

Define also functions of the Grassmann variables. Let $\chi$ be an element of $\mathfrak{A}$, i.e.
\begin{equation}\label{chi}
\chi=a+\sum\limits_{j=1}^n (a_j\psi_j+ b_j\overline{\psi}_j)+\sum\limits_{j\ne k}
(a_{j,k}\psi_j\psi_k+
b_{j,k}\psi_j\overline{\psi}_k+
c_{j,k}\overline{\psi}_j\overline{\psi}_k)+\ldots.
\end{equation}
Given the sufficiently smooth function $f$, we define by $f(\chi)$ the element of $\mathfrak{A}$ obtained by substituting $\chi-a$
in the Taylor series of $f$ at the point $a$:
\[
f(\chi)=a+f'(a)(\chi-a)+\dfrac{f''(a)}{2!}(\chi-a)^2+\ldots
\]
Since $\chi$ is a polynomial of $\{\psi_j\}_{j=1}^n$,
$\{\overline{\psi}_j\}_{j=1}^n$ of the form (\ref{chi}), according to (\ref{anticom}) there exists such
$l$ that $(\chi-a)^l=0$, and hence the series terminates after a finite number of terms.

Following Berezin \cite{Ber}, we define the operation of
integration with respect to the anticommuting variables in a formal
way:
\begin{equation*}
\intd d\,\psi_j=\intd d\,\overline{\psi}_j=0,\quad \intd
\psi_jd\,\psi_j=\intd \overline{\psi}_jd\,\overline{\psi}_j=1,
\end{equation*}
and then extend the definition to the general element of $\mathfrak{A}$ by
the linearity. A multiple integral is defined to be a repeated
integral. Assume also that the ``differentials'' $d\,\psi_j$ and
$d\,\overline{\psi}_k$ anticommute with each other and with the
variables $\psi_j$ and $\overline{\psi}_k$. More precisely, if
$$
f(\psi_1,\ldots,\psi_k)=p_0+\sum\limits_{j_1=1}^k
p_{j_1}\psi_{j_1}+\sum\limits_{j_1<j_2}p_{j_1,j_2}\psi_{j_1}\psi_{j_2}+
\ldots+p_{1,2,\ldots,k}\psi_1\ldots\psi_k,
$$
then
\begin{equation*}
\intd f(\psi_1,\ldots,\psi_k)d\,\psi_k\ldots d\,\psi_1=p_{1,2,\ldots,k}.
\end{equation*}
It is easy to see that one can perform a linear change of variables in the Grassmann integrals. Namely,  
suppose $$\chi=\begin{pmatrix}
\chi_1\\ \vdots \\ \chi_k
\end{pmatrix},\quad d\chi=d\chi_k\ldots d\chi_1,$$ and
$A$ is invertible complex matrix. Then if $\chi=A\zeta$,  it is easy to check that
\begin{equation}\label{Gr_change}
\int f(\chi)d\chi=\frac{1}{\det A}\int f(A\zeta)d\zeta.
\end{equation}
Note that the ``Jacobian" (which is called a ``Berezinian" for the change of Grassmann variables) is an {\it inverse} determinant in contrast to the
the determinant for the usual integral.

One can also perform a linear shift of variables $\chi+\psi\to \chi$ by writing 
\begin{equation}\label{Gr_shift1}
\int f(\chi+\psi)d\psi=\int f(\chi)d\chi
\end{equation}
for any $k$-dimension Grassmann vector $\psi$  with odd (anticommuting) elements (which does non depend on $\chi$). Moreover, one can easily show that if 
$f:\mathbb{R}^k\to \mathbb{R}$ as a sufficiently smooth function with finite support (or exponential decay at $\infty$), then for the usual real integral
\begin{equation}\label{Gr_shift2}
\displaystyle\int_{\mathbb{R}^k} f(x+a)dx=\int_{\mathbb{R}^k} f(x)dx
\end{equation}
for any $k$-dimension Grassmann vector $a$  with even (commuting) elements  independent of $x$.

   Let $A$ be an ordinary Hermitian matrix with a positive real part. The following Gaussian
integral is well-known
\begin{equation}\label{G_C}
\intd \exp\Big\{-\sum\limits_{j,k=1}^nA_{jk}z_j\overline{z}_k\Big\} \prod\limits_{j=1}^n\dfrac{d\,\Re
z_jd\,\Im z_j}{\pi}=\dfrac{1}{\mdet A}.
\end{equation}
One of the important formulas of the Grassmann variables theory is the analog of this formula for the
Grassmann algebra (see \cite{Ber}):
\begin{equation}\label{G_Gr}
\int \exp\Big\{-\sum\limits_{j,k=1}^nA_{jk}\overline{\psi}_j\psi_k\Big\}
\prod\limits_{j=1}^nd\,\overline{\psi}_j d\,\psi_j=\mdet A,
\end{equation}
where $A$ now is any $n\times n$ matrix.

Let
\[
F=\left(\begin{array}{cc}
A&\chi\\
\eta&B\\
\end{array}\right),\quad
\]
where $A$ and $B$ are Hermitian complex $k\times k$ matrices such that $\Re B>0$ and $\chi$, $\eta$ are $k\times k$ matrices
of independent
anticommuting Grassmann variables, and let
\[
\theta=(\psi_{1},\ldots,\psi_{k},x_{1},\ldots,x_{k})^t,
\]
where $\{\psi_j\}_{j=1}^k$ are independent Grassmann variables and $\{x_j\}_{j=1}^k$ are complex variables.
Combining (\ref{G_C}) -- (\ref{G_Gr}) we obtain (see \cite{Ber})
\begin{equation}\label{G_comb}
\intd \exp\{-\theta^*F\theta\}\prod\limits_{j=1}^kd\overline{\psi}_j\, d\psi_j \prod\limits_{j=1}^k\dfrac{\Re x_j\Im
x_j}{\pi}=\Sdet\, F,
\end{equation}
where
\begin{equation}\label{sdet}
 \Sdet\, F=\dfrac{\det\,(A-\chi\, B^{-1}\,\eta)}{\det\, B}.
\end{equation}
Note that this definition allows to maintain the usual properties of determinants such as $\Sdet\, (F_1F_2)=\Sdet\, F_1\cdot \Sdet\, F_2$, etc.
In addition, if we define
\begin{equation}\label{Str}
\Str\,F=\Tr A -\Tr B, 
\end{equation}
then we have a usual properties of trace such as 
\begin{align*}
&\Str F_1F_2=\Str F_2F_1,\\
&\log \Big(\Sdet F\Big)=\Str\Big(\log F\Big).
\end{align*}
We will need also the following Hubbard-Stratonovich transform formulas based on Gaussian integration:
\begin{align*}
&e^{ab}=\pi^{-1} \int e^{a \bar u+b u-\bar u u} d\bar u\, du,\\ 
&e^{-\rho\tau}=\int e^{\rho \chi+\tau\eta+\chi\eta} d\eta d\chi.
\end{align*}
Here $a, b$ can be complex numbers or any even (commuting) elements of $\mathfrak{A}$,
and $\rho,\tau$ are the odd (anticommuting) elements of $\mathfrak{A}$.
Applying these formulas multiple times one can get the Hubbard-Stratonovich transform formulas in the matrix form. Let $A, B$ are $p\times p$ matrices of complex numbers or even elements of $\mathfrak{A}$,
and $R,T$ are $p\times p$ matrices of odd elements of $\mathfrak{A}$. Then
\begin{align}\label{Hub_C}
&e^{\Tr AB}=\int e^{\Tr AW^*+\Tr BW-\Tr WW^*} dW,\\ \label{Hub_Gr}
&e^{\Tr  RT}=\int e^{-\Tr \nu\nu^*+\Tr\nu R+\Tr \nu^* T} d\nu.
\end{align}
Here $W$ is $p\times p$ complex matrix, $\nu=\{\nu_{j,k}\}_{j,k=1}^p$, $\nu^*=\{\bar\nu_{k,j}\}_{j,k=1}^p$ are $p\times p$ matrices of independent Grassmann variables and
\[
d\nu=\prod\limits_{j,k=1}^pd\nu_{kj}d\bar{\nu}_{jk} ,\quad dW=\prod\limits_{j,k=1}^p\frac{d\Re w_{jk} d\Im w_{jk}}{\pi}.
\]

\subsection{Derivation of the integral representation for $\mathcal{Z}(\zeta,\zeta',\hat\varepsilon, \hat\varepsilon')$}
The main aim of the section is to prove the following proposition:
\begin{proposition}\label{p:repr} Given $\mathcal{Z}(\zeta,\zeta',\hat\varepsilon, \hat\varepsilon')$ of (\ref{Z}) with $\hat\varepsilon=\hat \epsilon/n$, $\hat\varepsilon'=\hat\epsilon'/n$ we have
\begin{align}\label{repr_fin}
\mathcal{Z}(\zeta,\zeta',\hat\epsilon/n, \hat\epsilon'/n) = &\frac{ n^{12}}{32 \pi^{5}}\int\limits_{\herm_2^+} \,dR\,dR_2
 \int\limits_{\Im T= u_*} dT \int\limits_{\herm_2}dS
  \int\limits_{\herm_2^+}  d\Lambda\int\limits_{U(2)} dU  \int d\nu \,\,\\ 
  & \times e^{\mathcal{L}(\tilde Q,T,S,R)} 
  E_{*1}(\hat\epsilon,\Lambda,U)E_{*2}(\hat\epsilon',R,R_2), 
\notag\end{align}
where $dT$ above  means the integration  over  all independent entries of $T=iu_*I_2+T'$ ($T'=T'^*$), 
with $u_*$ defined in (\ref{eq_u}), $dU$ is a Haar measure over unitary group $U(2)$, and
\begin{equation}\label{dnu}
d\nu=\prod\limits_{l=1}^2\prod\limits_{j,k=1}^2d\nu^{(l)}_{kj}d\bar{\nu}^{(l)}_{jk}
\end{equation}
is an integration with respect  to $2\times 2$ matrices $\nu_l$, $\nu_l^*$, $l=1,2$ of independent Grassmann variables.
Here
\begin{equation}\label{L}
\mathcal{L}(\hat Q,T,S,R):= \log\,\Sdet \tilde Q-n \Tr \lbrack R^2+ 2iR T+ \Lambda^2 \rbrack  -\Tr (\nu_1\nu_1^* + \nu_2\nu_2^*), 
\end{equation}
\begin{equation}\label{hat_Q}
\hat Q = \begin{pmatrix}
\Lambda\otimes I_n & i\mathcal A(\zeta) & n^{-1/2}\nu_1 \otimes I_n & 0 \\
i\mathcal A(\zeta_U)^* &\Lambda \otimes I_n & 0 & n^{-1/2}\nu_2 \otimes I_n \\
n^{-1/2}\nu_2^* \otimes I_n & 0 & i(T+S) \otimes I_n & i\mathcal A(\zeta'_J)\\
0 & n^{-1/2}\nu_1^* \otimes I_n &i\mathcal A(\zeta'_J)^* & i(T-S)  \otimes I_n
\end{pmatrix}
\end{equation} 
with
\begin{align}
\label{cal_A}
&\mathcal A(\zeta) =  I_2 \otimes A_z + \frac{1}{\sqrt{n}}\zeta \otimes I_n, \quad A_z = A_0 - z_0,\\
& \zeta_{U}=U\zeta U^*,\quad \zeta'_J=J^{-1}\zeta' J,\quad J=R_2^{1/2}R^{1/2},
\label{zeta_U}\end{align}
\begin{align}\notag
E_{*1}(\hat\epsilon,\Lambda,U)=& (\Tr \Lambda)^2 \det \Lambda \exp\Big\{-\Tr (\Lambda U\hat\epsilon+\hat\epsilon U^*\Lambda )\Big\},\notag\\
E_{*2}(\hat\epsilon',R,R_2)=& \dfrac{(\Tr R)^2 \det R}{ \det\nolimits^{2} R_2} \, \exp\Big\{- \Tr\big(\hat\epsilon'R_2+\hat\epsilon' R_2^{-1}R^2\big)\Big\}.
\label{E_*1,2}\end{align}
\end{proposition}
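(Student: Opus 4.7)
The plan is to derive the representation (\ref{repr_fin}) by the standard SUSY procedure: represent each determinant as a Gaussian integral (Grassmann for the numerators, complex for the denominators), average over $H_0$, decouple the resulting quartic terms by the Hubbard--Stratonovich transforms (\ref{Hub_C})--(\ref{Hub_Gr}), integrate out the original Gaussian variables via (\ref{G_comb}) to obtain a superdeterminant, and finally convert the HS matrices to polar coordinates that produce the $U(2)$ integration and the weights $E_{*1}, E_{*2}$.

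\textbf{Step 1 (Gaussian representations and averaging).} First I would Hermitize each numerator: using (\ref{G_Gr}) applied to a block $2n\times 2n$ matrix with $H-z_j$ in the off--diagonal block and $\epsilon_j/n$ on the diagonal, $\det(Y(z_j)+(\epsilon_j/n)^2)$ becomes a Grassmann integral over two $n$--dimensional Grassmann vectors, while each $\det(Y(z'_j)+\varepsilon'^2)^{-1}$ becomes a complex Gaussian integral by (\ref{G_C}). This puts $\mathcal{Z}$ into the form of an $8n$--fold integral (4n Grassmann, 4n complex) of an exponential that is linear in $H_0$ and $H_0^*$. Taking the Gaussian expectation (entries iid with variance $1/n$) reduces $\mathcal{Z}$ to a purely SUSY integral whose exponent contains three families of quartic interactions: fermion--fermion within the two numerator sectors, boson--boson within the two denominator sectors, and a mixed fermion--boson coupling tying them together. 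These quartics reorganize, upon grouping the sectors labelled by $j=1,2$, into traces of products of $2\times 2$ bilinears.

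\textbf{Step 2 (Hubbard--Stratonovich and superdeterminant).} To linearize the three quartic families I would apply (\ref{Hub_C}) separately to the bosonic and fermionic pieces, introducing two complex $2\times 2$ matrices; after writing one of them as $iT$ with $T=T^*$ plus an anti-Hermitian part $S$, and the other as a positive Hermitian $R$ paired with an auxiliary $R_2$, the bosonic decoupling yields the factors $-n\Tr(R^2+2iRT)$ in (\ref{L}). The fermionic complex matrix similarly decomposes via its polar representation into a positive Hermitian $\Lambda$ and a unitary $U\in U(2)$, producing the $-n\Tr\Lambda^2$ term. The remaining mixed quartic is decoupled by (\ref{Hub_Gr}), which introduces the two $2\times 2$ Grassmann matrices $\nu_1,\nu_2$ and the term $-\Tr(\nu_1\nu_1^*+\nu_2\nu_2^*)$. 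At this point the integrand is Gaussian in the $8n$ original variables; integrating them out by (\ref{G_comb}) produces exactly $\Sdet \hat Q$ for the $4n\times 4n$ supermatrix $\hat Q$ whose Hermitian/Grassmann block structure is forced by the ordering of sectors and matches (\ref{hat_Q}) up to absorbing the unitary and polar rotations into the $\zeta$--blocks.

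\textbf{Step 3 (polar rotations into the blocks, Jacobians, prefactor).} The parameters $U$ and $J=R_2^{1/2}R^{1/2}$ multiplying the sectors can be absorbed into the $\zeta$--blocks of $\hat Q$ by conjugating the Grassmann/complex integration variables via (\ref{Gr_change}) and (\ref{Gr_shift1})--(\ref{Gr_shift2}); this replaces $\mathcal{A}(\zeta)^*$ by $\mathcal{A}(\zeta_U)^*$ in the numerator block and $\mathcal{A}(\zeta')$ by $\mathcal{A}(\zeta'_J)$ in the denominator block (see (\ref{cal_A})--(\ref{zeta_U})), without generating an overall factor since the Grassmann and complex Berezinians cancel in the superdeterminant. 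The radial Jacobians of the polar decompositions contribute $(\Tr\Lambda)^2\det\Lambda$ on the fermionic side and $(\Tr R)^2\det R/\det^2 R_2$ on the bosonic side, which together with the explicit $\hat\epsilon$-- and $\hat\epsilon'$--linear terms coming from the source $\hat\varepsilon=\hat\epsilon/n$ produce $E_{*1}$ and $E_{*2}$ in (\ref{E_*1,2}). The $n$--dependent rescalings $\nu_\ell\to\nu_\ell/\sqrt n$ in the mixed blocks, together with the rescalings of $T,R,R_2,\Lambda$ by $n$, collect to the overall prefactor $n^{12}/(32\pi^5)$.

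\textbf{Main obstacle.} The bookkeeping of Step 3 is the most delicate point: one has to keep careful track of the eight Grassmann rescalings, the radial Jacobians of the two polar decompositions, the Haar normalization of $dU$, and the extra phase/sign factors from (\ref{Gr_change}) when $U$ and $J$ are pushed through the Grassmann blocks, in order to reproduce precisely the prefactor $n^{12}/(32\pi^5)$ and the weights $E_{*1}, E_{*2}$. A second subtle point is the deformation of the $T$--contour to $\Im T=u_*$, which must be justified by the decay of $\Sdet \hat Q$ on the relevant deformed contour so that the shift is legal and prepares the integrand for the saddle--point analysis carried out in Section \ref{s:3}.
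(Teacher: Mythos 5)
Your Steps 1 and 3 follow the paper's route (Gaussian/Grassmann representation, averaging over $H_0$, decoupling of the mixed quartic by (\ref{Hub_Gr}), integration to a superdeterminant, polar decomposition $W=\Lambda U$ with Jacobian $(\Tr\Lambda)^2\det\Lambda$, absorption of $U$ and $J$ into $\zeta_U$, $\zeta'_J$ with unit Berezinian). The genuine gap is in your Step 2, in the bosonic sector. You propose to decouple the quartic $\Tr X^{(1)*}X^{(1)}X^{(2)*}X^{(2)}$ by a direct Hubbard--Stratonovich transform (\ref{Hub_C}); but this would introduce a \emph{single} complex $2\times 2$ matrix, and there is no mechanism in your scheme producing the four bosonic matrices $R,R_2,T,S$ that appear in (\ref{repr_fin}), nor the weight $E_{*2}$, nor the matrix $J=R_2^{1/2}R^{1/2}$ entering $\zeta'_J$. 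Your phrase ``a positive Hermitian $R$ paired with an auxiliary $R_2$'' is exactly the point that is left unexplained: in the paper these variables do not come from an HS decoupling at all, but from a matrix Fourier transform (a delta-function constraint) imposing $R_l=n^{-1}X^{(l)*}X^{(l)}$, $l=1,2$, with Hermitian Fourier conjugates $T_1,T_2$; the quartic then becomes the c-number $-n\Tr R_1R_2$, and only after the change $T_l\to JT_lJ^*$ type transformations, the substitution $R_1=R_2^{-1/2}R^2R_2^{-1/2}$, $R_3=R^2$ (with Jacobians $\det^{-2}R_2$ and $4(\Tr R)^2\det R$, which is where $E_{*2}$ really comes from, not from a ``polar decomposition on the bosonic side''), and the rotation $(T_1,T_2)\to(T,S)$ does one arrive at $-n\Tr(R^2+2iRT)$ and the $\zeta'_J$ block. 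Without this constraint step your parameter count on the bosonic side is wrong (8 real parameters instead of 16), so the claimed representation cannot be reached.

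A secondary problem with the direct bosonic HS is convergence: after decoupling, the Gaussian integral over $X$ is only conditionally defined because the real part of the resulting quadratic form is not sign-definite for a generic complex decoupling matrix, whereas in the paper the $X$-integral is performed against the explicit $\hat\varepsilon'$-regularized quadratic form and all subsequent contour moves (including the final shift to $\Im T_1=\Im T_2=u_*I$) are justified by the positivity of the real parts of the eigenvalues of the bosonic block of $Q_1$. So even granting your variable bookkeeping, the analytic justification of Step 2 as written would fail; replacing the bosonic HS by the Fourier/delta-function step is the missing idea.
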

\textit{Proof.} 
Introduce the sets of Grassmann and complex variables:
\begin{align*}
&\Psi_j^{(l)}=(\psi_{j1}^{(l)},\dots,\psi_{jn}^{(l)})^t,\quad \bar\Psi_j^{(l)}=(\bar\psi_{j1}^{(l)},\dots,\bar\psi_{jn}^{(l)})^t,\quad l=1,2, \quad j=1,2 \quad -\quad \hbox{Grassmann};\\
&X_j^{(l)}=(x_{j1}^{(l)},\dots,x_{jn}^{(l)})^t,\quad \bar X_j^{(l)}=(\bar x_{j1}^{(l)},\dots,\bar x_{jn}^{(l)})^t,\quad l=1,2,\quad j=1,2 \quad -\quad \hbox{complex}.\notag
\end{align*}
and use the standard  linearisation  formula
\begin{align*}
\det(Y(z)+\varepsilon^2)=\det \tilde Y(z,\varepsilon), \quad \tilde Y(z,\varepsilon)
=\left(\begin{array}{cc}-\varepsilon&i(H-z)\\i(H-z)^*&-\varepsilon\end{array}\right).
\end{align*}
Since all eigenvalues of $\tilde Y(z,\varepsilon)$ have the real part $-\varepsilon$, we can apply (\ref{G_C}), (\ref{G_Gr}) to write for $j=1,2$
\begin{align*}
&(\det(Y(z_j')+(\varepsilon_j')^2))^{-1}=\int d\bar X_j dX_j\exp\{ X_j^*\tilde Y(z_j',\varepsilon_j')X_j\},\\
&\det(Y(z_j)+\varepsilon_j^2)=\int d\bar \Psi_j d\Psi_j \exp\{- \Psi_j^*\tilde Y(z_j,\varepsilon_j)\Psi_j\},
\end{align*}
with  $2n$-dimensional column-vectors $$X_j=\left(\begin{matrix}X_j^{(1)}\\ X_j^{(2)}\end{matrix}\right),\quad \Psi_j=\left(\begin{matrix}\Psi_j^{(1)}\\\Psi_j^{(2)}\end{matrix}\right),$$ 
\begin{align*}
d\bar X_j dX_j=\prod_{\alpha=1}^n \prod_{l=1}^2\dfrac{d\bar x_{j\alpha}^{(l)} dx_{j\alpha}^{(l)}}{\pi},\quad d\bar \Psi_j d\Psi_j =\prod_{\alpha=1}^n \prod_{l=1}^2 
d\bar \psi_{j\alpha}^{(l)} d\psi_{j\alpha}^{(l)}.
\end{align*}
Thus we get
\begin{equation*}
\mathcal{Z}(\zeta,\zeta',\hat\varepsilon, \hat\varepsilon') = \int Z(X) \,\tilde Z(\Psi) \,E(\Psi,X) \,dX d\Psi ,
\end{equation*}
where
\begin{align}\label{ZZE}
&Z(X)=\prod_{j=1}^2\exp\Big\{i\Tr(A_0-z_j')^*X_j^{(1)} X_j^{(2)*}+i\Tr (A_0-z_j')X_j^{(2)}X_j^{(1)*}-\varepsilon'_j \Tr X_j^*X_j\Big\},\\ \notag
&\tilde Z(\Psi)=\prod_{j=1}^2\exp\Big\{-i\Tr (A_0-z_j)^*\Psi_j^{(1)}\Psi_j^{(2)*}-i\Tr (A_0-z_j)\Psi_j^{(2)}\Psi_j^{(1)*}+\varepsilon_j \Tr \Psi_j^* \Psi_j\Big\},
\end{align}
\begin{align*}
&E(\Psi, X) = \E \Bigl\lbrace \exp \Big\{ i\Tr H \big(X^{(2)}X^{(1)*}+\Psi^{(2)}\Psi^{(1)*}\big)+i\Tr H^* \big(X^{(1)}X^{(2)*}+\Psi^{(1)}\Psi^{(2)*}\big)\Big\}\Bigr\rbrace,
\end{align*}
with $n\times 2$ matrices
\[
X^{(l)}=(X^{(l)}_1,X^{(l)}_2),\quad \Psi^{(l)}=(\Psi^{(l)}_1,\Psi^{(l)}_2),\quad l=1,2
\]
and
\[
 dX=\prod_{j=1}^2 d\bar X_j dX_j,\quad d\Psi =\prod_{j=1}^2 d\bar \Psi_j d\Psi_j.
\]
Taking the expectation with respect to $H$ we get
\begin{align} \label{Z_av}
E(\Psi, X) &= \exp \Bigl\lbrace -\frac{1}{n} \Tr \Big(X^{(2)}X^{(1)*}+\Psi^{(2)}\Psi^{(1)*}\Big) \Big(X^{(1)}X^{(2)*}+\Psi^{(1)}\Psi^{(2)*}\Big)  \Bigr\rbrace\\ \notag
&=\exp \Bigl\lbrace -\frac{1}{n} \Bigl\lbrack \Tr X^{(1)*}X^{(1)}X^{(2)*} X^{(2)}-\Tr \Psi^{(1)*}\Psi^{(1)}\Psi^{(2)*} \Psi^{(2)} \\ \notag
&-\Tr \Psi^{(1)*}X^{(1)}X^{(2)*}\Psi^{(2)}+\Tr X^{(1)*}\Psi^{(1)}\Psi^{(2)*} X^{(2)}\Bigr\rbrack \Bigr\rbrace.
\end{align}
To explain an application of the Fourier transform,  recall that,
if we have functions $F,\sigma:\mathbb{R}^n\to \mathbb{R}$ and $f:\mathbb{R}\to \mathbb{R}$,
such that $F\in L_1(\mathbb{R}^n)\cap L_2(\mathbb{R}^n)$ and $f\in L_1(\mathbb{R})\cap L_2(\mathbb{R})$
and  $f(\sigma(X))\in L_1(\mathbb{R}^n)\cap L_2(\mathbb{R}^n)$,    then
\begin{equation*}
\int_{\mathbb{R}^n}F(X)f(\sigma(X))dX=(2\pi)^{-1}\int_{\mathbb{R}^n}F(X)\int \int e^{it(\sigma(X)-r)}f(r)dt drdX.
\end{equation*}
In our case we consider an analogue of  "condition function"  $\sigma$ of the matrix form
\begin{equation}\label{cond_X}
R_l=n^{-1} X^{(l)*}X^{(l)},\quad l=1,2,
\end{equation}
and then use corresponding matrix Fourier transform:
\begin{align}\label{Four}
& \int Z(X)\,\tilde Z(\Psi)E(\Psi,X) dX=
 \frac{n^8}{(2\pi)^8}\int dX\,Z(X)\tilde Z(\Psi)\int dT_1dT_2\int dR_1 dR_2\\
 &\exp\Big\{ -i\Tr (nR_1-  X^{(1)*}X^{(1)})T_1 -i\Tr(nR_2-X^{(2)*}X^{(2)})T_2  -n\Tr R_1R_2\Big\}\notag\\
 &\exp\Big\{n^{-1}\Big(\Tr \Psi^{(1)*}\Psi^{(1)}\Psi^{(2)*}\Psi^{(2)} +\Tr \Psi^{(1)*}X^{(1)}X^{(2)*}\Psi^{(2)}+\Tr \Psi^{(2)*} X^{(2)}X^{(1)*}\Psi^{(1)}\Big) \Big\}.
\notag\end{align}
To deal with $\Psi$-integration, we now apply the Hubbard--Stratonovich transformation (\ref{Hubb-Str}). Namely, we write
\begin{align}\label{Hubb-Str}
e^{n^{-1}\Tr \Psi^{(1)*}X^{(1)}X^{(2)*}\Psi^{(2)}} = \int 
e^{ -\Tr \nu_1\nu_1^* 
+ n^{-1/2} \Tr \nu_1 \Psi^{(1)*}X^{(1)} + n^{-1/2}\Tr \nu_1^*X^{(2)*}\Psi^{(2)}} d\nu_1, \\ \notag
e^{n^{-1}\Tr \Psi^{(2)*} X^{(2)}X^{(1)*}\Psi^{(1)}} = \int 
e^{ -\Tr \nu_2\nu_2^* 
+ n^{-1/2} \Tr \nu_2 \Psi^{(2)*}X^{(2)} + n^{-1/2}\Tr \nu_2^*X^{(1)*} \Psi^{(1)}} d\nu_2, \\ \notag
e^{n^{-1}\Tr \Psi^{(1)*}\Psi^{(1)}\Psi^{(2)*} \Psi^{(2)}} = n \int 
 e^{ -n \Tr W^*W 
  + \Tr W \Psi^{(1)*}\Psi^{(1)}+\Tr W^*\Psi^{(2)*} \Psi^{(2)}} dW,
\end{align}
where for $l=1,2$
\begin{equation*}
\nu_l=\left(\begin{matrix}
\nu_{11}^{(l)}&\nu_{12}^{(l)}\\
\nu_{21}^{(l)}&\nu_{22}^{(l)}
\end{matrix}\right), \quad  \nu_l^*=\left(\begin{matrix}
\bar \nu_{11}^{(l)}&\bar \nu_{21}^{(l)}\\
\bar \nu_{12}^{(l)}&\bar \nu_{22}^{(l)}
\end{matrix}\right)
\end{equation*}
are $2\times 2$ matrices of independent Grassmann variables, 
\begin{equation*}
W=\left(\begin{matrix}
w_{11}&w_{12}\\
w_{21}&w_{22}
\end{matrix}\right)
\end{equation*}
is a $2\times 2$ matrix of independent complex variables, and
\[
d\nu_l=\prod\limits_{j,k=1}^2d\nu^{(l)}_{kj}d\bar{\nu}^{(l)}_{jk} ,\quad dW=\prod\limits_{j,k=1}^2\frac{d\bar{w}_{jk} dw_{jk}}{\pi}.
\]
Combining  \eqref{Hubb-Str} and (\ref{Four}), we obtain
\begin{align}\label{Z after H-S}
&\mathcal{Z}(\zeta,\zeta',\hat\varepsilon, \hat\varepsilon') = \frac{n^{12}}{(2\pi)^8} \int\limits_{\herm_2^+} d R_1 dR_2 \int\limits_{\herm_2} dT_1 dT_2 
\int dW \int d\nu\\ \notag
&\times\exp\{-\Tr (\nu_1\nu_1^* + \nu_2 \nu_2^*) - n\Tr WW^* -in\Tr(R_1 T_1 + R_2 T_2) -n\Tr R_1R_2\rbrace \\  \notag
&\times\int d\Psi dX \exp \lbrace {\tilde{\Psi}}^* Q\tilde{\Psi} \rbrace,
\end{align}
where
 $\tilde{\Psi}$, $\conjSupVec{\tilde{\Psi}}$ are  super-vectors of the form
\begin{align*}
&\tilde{\Psi} =(\Psi^{(1)}_1, \Psi^{(1)}_2,\Psi^{(2)}_1, \Psi^{(2)}_2, X^{(1)}_1, X^{(1)}_2, X^{(2)}_1, X^{(2)}_2)^t,\\
&\tilde{\Psi}^* = (\bar\Psi^{(1)}_1, \bar\Psi^{(1)}_2,\bar\Psi^{(2)}_1, \bar\Psi^{(2)}_2, \bar X^{(1)}_1, \bar X^{(1)}_2, \bar X^{(2)}_1, \bar X^{(2)}_2),\\
&d\nu=d\nu_1\,d\nu_2,
\end{align*}
and  $Q$ is a $8n \times 8n$ super-matrix of the form
\begin{align}\label{Q_start}
Q = \begin{pmatrix}
(W - \hat\varepsilon) \otimes I_n &i\mathcal A(\zeta) & n^{-1/2} \nu_1 \otimes I_n & 0 \\
i\mathcal A(\zeta)^* & (W^* - \hat\varepsilon) \otimes I_n & 0 & n^{-1/2} \nu_2 \otimes I_n \\
n^{-1/2} \nu_2^* \otimes I_n & 0 & (iT_1 - \hat\varepsilon') \otimes I_n & i\mathcal A(\zeta') \\
0 & n^{-1/2} \nu_1^* \otimes I_n & i\mathcal A(\zeta')^* & (iT_2 - \hat\varepsilon') \otimes I_n
\end{pmatrix},
\end{align}
where  $\zeta = \diag \lbrace \zeta_1, \zeta_2 \rbrace$, $\zeta' = \diag \lbrace \zeta'_1, \zeta'_2 \rbrace$, and $\mathcal A(\zeta')$
is defined in (\ref{cal_A}).

Integrating with respect to $d\Psi dX$ according to \eqref{G_comb}, we get
\begin{equation}\label{sdetQ}
\begin{split}
\int d\Psi dX \exp \lbrace\tilde{\Psi}^* Q\tilde{\Psi}\rbrace  = \Sdet Q. 
\end{split}
\end{equation}
This implies
\begin{align*}
&\mathcal{Z}(\zeta,\zeta',\hat\varepsilon, \hat\varepsilon')= \frac{n^{12}}{(2\pi)^8} \int\limits_{\herm_2^+} d R_1 dR_2 \int\limits_{\herm_2} dT_1 dT_2 \int dW \int d\nu
\\ \notag
&\times \exp \lbrace\log\, \Sdet Q -in\Tr(R_1 T_1 + R_2 T_2) -n \Tr R_1R_2-\Tr (\nu_1 \nu_1^* + \nu_2 \nu_2^*) - n\Tr WW^*\rbrace 
\end{align*}
Further, change the variables 
\begin{align*}
T_1 &\to JT_1J^*	,			& \nu_1 &\to \nu_1J^* ,	& \nu_1^* &\to (J^*)^{-1}\nu_1^*,\\
T_2 &\to (J^*)^{-1}T_2J^{-1},	& \nu_2 &\to \nu_2J^{-1},	& \nu_2^* &\to J\nu_2^*
\end{align*}
with 
\begin{align}\label{J}
J = R_2^{1/2}R^{-1/2},\quad R =(R_2^{1/2}R_1R_2^{1/2})^{1/2}
\end{align}
(notice that the Berezinian of such change is 1 according to the Jacobian of the change of $T_1,T_2$ and to (\ref{Gr_change})). 
Then
\begin{align*}
&\Tr R_1R_2=\Tr R_2^{1/2}R_1R_2^{1/2}=\Tr R^2,\\
&\Tr R_1JT_1J^*=\Tr R^{-1/2}R_2^{1/2}R_1R_2^{1/2}R^{-1/2} T_1=\Tr R^{-1/2}R^2R^{-1/2} T_1=\Tr R T_1,\\
&\Tr R_2(J^*)^{-1}T_2J^{-1}=\Tr R^{1/2}R_2^{-1/2}R_2R_2^{-1/2}R^{1/2} T_2=\Tr R^{1/2}R^{1/2} T_2=\Tr R T_2,
\end{align*}
and hence we obtain
\begin{equation*}
\begin{split}
&\mathcal{Z}(\zeta,\zeta',\hat\varepsilon, \hat\varepsilon') = \frac{n^{12}}{(2\pi)^8} \int\limits_{\herm_2^+} d R_1 dR_2 \int\limits_{\herm_2} dT_1 dT_2 \int dW 
\int d\nu\\
&\quad{}\times \exp \lbrace\log\, \Sdet Q'+ n \Tr \lbrack - R^2  -iR(T_1 + T_2) - WW^* \rbrack - \Tr (\nu_1\nu_1^* + \nu_2\nu_2^*) \rbrace,
\end{split}
\end{equation*}
where
\begin{align}\notag
&Q' = \begin{pmatrix}
(W - \hat\varepsilon) \otimes I_n & i\mathcal A(\zeta) & n^{-1/2} \nu_1 \otimes I_n & 0 \\
i\mathcal A(\zeta)^*& (W^* - \hat\varepsilon) \otimes I_n & 0 & n^{-1/2} \nu_2 \otimes I_n \\
n^{-1/2} \nu_2^* \otimes I_n & 0 & (iT_1 - J^{-1}\hat\varepsilon'(J^*)^{-1}) \otimes I_n & i\mathcal A(\zeta'_J) \\
0 & n^{-1/2} \nu_1^* \otimes I_n & i\mathcal A(\zeta'_J)^* & (iT_2 - J^*\hat\varepsilon' J) \otimes I_n
\end{pmatrix},\\
&\zeta'_J=J^{-1}\zeta' J
\label{zeta_J}\end{align}
 Here we used that $Q'=\mathcal{J}Q\mathcal{J}^*$ with
\[
\mathcal{J}=\begin{pmatrix}
I_2&0&0&0\\
0&I_2&0&0\\
0&0&J^{-1}&0\\
0&0&0&J^*
\end{pmatrix}\otimes I_n.
\]
The next change is
\begin{align*}
T_1 + iJ^{-1}\hat\varepsilon'(J^*)^{-1}\to T_1, \quad
 T_2 + iJ^*\hat\varepsilon'J\to T_2.
\end{align*}
Notice  that a ``matrix shift'' above means the change of integration contours for corresponding matrix entries. For $T_{1},T_{2}$ it is
possible if the imaginary part of  ``shifting matrix'' is positive, since in this case all eigenvalues of the ``bosonic"  block ($2\times 2$ block at the
 right  bottom corner of matrix $Q_1$ below)  have positive 
real parts and  hence cannot be 0.

Since $\Tr RJ^{-1}\hat\varepsilon'(J^*)^{-1} = \Tr \hat{\varepsilon}'R_1$ and $\Tr R J^*\hat\varepsilon'J = \Tr \hat{\varepsilon}'R_2$, we get
\begin{align*}
&\mathcal{Z}(\zeta,\zeta',\hat\varepsilon, \hat\varepsilon') =\frac{n^{12}}{(2\pi)^8} \int\limits_{\herm_2^+} d R_1 dR_2  \int\limits_{\substack{\Im T_1 = J^{-1}\hat\varepsilon'(J^*)^{-1} \\ \Im T_2 = J^*\hat\varepsilon'J}} dT_1 dT_2 \int dW d\nu\,\\
&\times \exp \lbrace \log\, \Sdet Q_1-n \Tr \lbrack  R^2 + \hat\varepsilon' R_1 + \hat\varepsilon' R_2 -iR(T_1 + T_2) + WW^* \rbrack - \Tr (\nu_1\nu_1^* + \nu_2\nu_2^*) \rbrace,
\end{align*}
where
\begin{equation*}
Q_1 = \begin{pmatrix}
(W - \hat\varepsilon) \otimes I_n &i \mathcal A (\zeta) & n^{-1/2} \nu_1 \otimes I_n & 0 \\
i\mathcal A (\zeta)^* & (W^* - \hat\varepsilon) \otimes I_n & 0 & n^{-1/2} \nu_2 \otimes I_n \\
n^{-1/2} \nu_2^* \otimes I_n & 0 & iT_1 \otimes I_n &  i\mathcal A (\zeta'_J)\\
0 & n^{-1/2} \nu_1^* \otimes I_n & i\mathcal A (\zeta'_J)^* & iT_2 \otimes I_n
\end{pmatrix}
\end{equation*}
Let us shift the domain of  integration with respect to $T_1$, $T_2$ to $\Im T_1 = \Im T_2 = u_*I$. 
Then
\begin{align*}
&\mathcal{Z}(\zeta,\zeta',\hat\varepsilon, \hat\varepsilon') =\frac{n^{12}}{(2\pi)^8} \int\limits_{\herm_2^+} d R_1 dR_2  \int\limits_{\Im T_1 = \Im T_2 = u_*} dT_1 dT_2 \int dW d\nu \,\\
&\times \exp \lbrace \log\, \Sdet Q_1-n \Tr \lbrack  R^2 + \hat\varepsilon' R_1 + \hat\varepsilon' R_2 -iR(T_1 + T_2) + WW^* \rbrack - \Tr (\nu_1\nu_1^* + \nu_2\nu_2^*) \rbrace.
\end{align*}
Now we are going to change variables $(R_1,R_2)$ to $(R, R_2)$. To this end we first change of the variables $(R_1,R_2)$ to $(R_3,R_2)$ with $R_1= R_2^{-1/2}R_3R_2^{-1/2}$ (with a Jacobian $\det\nolimits^{-2} R_2$), and then change  $R_3 = R^2$ (with a Jacobian $4(\Tr R)^2 \det R$, see Lemma \ref{A = B^2}). We obtain
\begin{align*}\notag
&\mathcal{Z}(\zeta,\zeta',\hat\varepsilon, \hat\varepsilon') = \frac{4n^{12}}{(2\pi)^8}\int\limits_{\herm_2^+} \frac{(\Tr R)^2 \det R}{ \det\nolimits^{2} R_2} dRdR_2
 \int\limits_{\Im T_1 = \Im T_2 = u_*} dT_1 dT_2 \int dW \int d\nu\\
&\times  \exp \lbrace - \Tr (\nu_1\nu_1^* + \nu_2\nu_2^*) + \log\, \Sdet Q_1\rbrace\\ \notag
&\times \exp\lbrace-n \Tr \lbrack R^2 +\hat\varepsilon' R_2^{-1/2}R^2R_2^{-1/2} + \hat\varepsilon' R_2 +iR(T_1 + T_2) +WW^* \rbrack\rbrace.
\end{align*}
Finally, we make the shift $W-\hat\varepsilon\to W$ and then change
\begin{equation*}
W=\Lambda U, 
\end{equation*}
where $\Lambda\in \herm_2^+$, $U\in U(2)$. According to Lemma \ref{W = Lambda U}, this change gives
\[
dW\to 
2\pi^3\,(\Tr \Lambda)^2 \det \Lambda  \,d\Lambda \,dU,
\] 
where $d\Lambda$ is the standard integral over $\herm_2^+$ and $dU$ is the Haar measure over $U(2)$. 
If we  change also
\begin{align*}
(T_1,T_2)\to (T,S),\quad T=\frac{1}{2}(T_1+T_2),\quad S=\frac{1}{2}(T_1-T_2),
\end{align*}
then we get  representation (\ref{repr_fin}) for  $\mathcal{Z}(\zeta,\zeta',\hat\epsilon/n, \hat\epsilon'/n)$ with $\hat Q$ replaced by
\begin{equation*}
\tilde  Q = \begin{pmatrix}
\Lambda U \otimes I_n & i\mathcal A(\zeta) &n^{-1/2} \nu_1 \otimes I_n & 0 \\
i\mathcal A(\zeta)^* & U^*\Lambda \otimes I_n & 0 & n^{-1/2}\nu_2 \otimes I_n \\
n^{-1/2}\nu_2^* \otimes I_n & 0 & i(T+S) \otimes I_n & i\mathcal A(\zeta'_J)\\
0 & n^{-1/2}\nu_1^* \otimes I_n & i\mathcal A(\zeta'_J)^*& i(T-S)  \otimes I_n
\end{pmatrix}.
\end{equation*} 
 Now we make one more  transformation of $\tilde Q$.  Define
\[\hat U_1=\mathrm{diag}\{I_2,U,I_2,I_2\}\otimes I_n,\quad \hat U_2=\mathrm{diag}\{U^*,I_2,I_2,I_2\}\otimes I_n.\]
and change 
\[\nu_2\to U\nu_2,\quad \nu_2^*\to \nu_2^* U^*,\quad \hat Q=\hat U_1\tilde Q\hat U_2.\]
Notice that the Berezinian of such change is $1$ according to (\ref{Gr_change}). Therefore, we obtain (\ref{repr_fin}).

$\square$

In Section \ref{s:4} we will also use a modified form of $\hat Q$, which can be obtained  by replacing its second and  third rows  and columns:
\begin{align} \label{F_1,F_2}
 \hat Q=& \begin{pmatrix}F_1\otimes I_n&i\mathcal{Q}\\ i\mathcal{Q}_*&F_2\otimes I_n
\end{pmatrix},\qquad  \qquad F_{1,2}=\begin{pmatrix}
\Lambda &n^{-1/2}\nu_1 \\
n^{-1/2} \nu_2^* &i(T\pm S) 
\end{pmatrix}, \\
\mathcal{Q}:=& I_4\otimes A_z+n^{-1/2}\hat\zeta\otimes I_n,\quad \mathcal{Q}_*:= I_4\otimes A_z^*+n^{-1/2}\hat\zeta_*\otimes I_n,
\notag\\
\label{hat_zeta}
\hat \zeta=&\mathrm{diag}\{\zeta,\zeta'_{J}\},\quad\hat \zeta_*=\mathrm{diag}\{\zeta_U^*,(\zeta'_{J})^*\},\quad\zeta=\mathrm{diag}\{\zeta_1,\zeta_2\}, \quad \zeta'=\mathrm{diag}\{\zeta'_1,\zeta'_2\},
\end{align}
where the last relation means that we represent $\hat Q $ as a block matrix and denote 
$F_1\otimes I_n, i\mathcal{Q}, F_2\otimes I_n,i\mathcal{Q}_*$ corresponding blocks.
It is easy to check that 
\begin{align*}
\Sdet \hat Q=& \Sdet (F_2\otimes I_n)\cdot \Sdet \Big(F_1\otimes I_n+\mathcal{Q} (F_2\otimes I_n)^{-1}\mathcal{Q}_*\Big)\\
=&\Sdet \Big(F_1F_2\otimes I_n+\mathcal{Q} (F_2\otimes I_n)^{-1}\mathcal{Q}_*(F_2\otimes I_n)\Big).
\end{align*}
In addition, 
\begin{align*}
\Str F_1F_2=\Tr\Lambda^2+n^{-1}\Tr\nu_1\nu_1^*+n^{-1}\Tr\nu_2\nu_2^*+\Tr (T^2-S^2).
\end{align*}
Hence $\mathcal{L}(\tilde Q,T,S,R)$ from (\ref{L}) can be also written in the form
\begin{align}\label{L_fin}
\mathcal{L}(\hat Q,T,S,R)=&\log\, \Sdet \Big(F_1F_2\otimes I_n+\mathcal{Q} (F_2\otimes I_n)^{-1}\mathcal{Q}_*(F_2\otimes I_n)\Big)\\ \notag
&-n \,\Str F_1F_2-n\Tr S^2-n\Tr (R+iT)^2
\end{align}
with $F_1\otimes I_n,\mathcal{Q},F_2\otimes I_n,\mathcal{Q}_*$   defined in (\ref{F_1,F_2}) after the above change.

\section{Saddle-point analysis}\label{s:3}

The main goal of the section is to find a saddle-point of $\mathcal{L}$ from (\ref{L}) and to prove that one can restrict the integration in (\ref{repr_fin}) by the
small neighbourhood of the saddle-point.

It is easy to see that $\log\Sdet \hat Q$ from (\ref{repr_fin}) has the form
\begin{align}\notag
\log\Sdet \hat Q=&\log\det Q_1-\log\det Q_2+\mathrm{Grassm}\\
\hbox{where}\quad Q_1=&\left(\begin{array}{cc}\Lambda\otimes I_n&i\mathcal A(\zeta)\\  i\mathcal A(\zeta_U)^*&\Lambda\otimes I_n\end{array}\right),
\quad Q_2=\left(\begin{array}{cc}i(T+S)\otimes I_n&i \mathcal A(\zeta'_J) \\ i\mathcal A^*(\zeta'_J) &i(T-S)\otimes I_n\end{array}\right).
\label{Q_2}
\end{align}
and we denote by $\mathrm{Grassm}$ all terms which
contain Grassmann variables.

Hence,  $\mathcal{L}$ of (\ref{L}) can be rewritten as
\begin{align}
&\mathcal{L}=n(\mathcal{F}_1+\mathcal{F}_2)+\mathrm{Grassm},\label{L_saddle}\end{align}
where
\begin{align}\label{Phi_1}
 &\mathcal{F}_1=n^{-1}\log\det Q_1-\Tr\Lambda^2,\\
&\mathcal{F}_2=-n^{-1}\log \det Q_2-\Tr R^2-2i\Tr RT\label{Phi_2}.
\end{align}
Therefore, we need to study the behaviour of  $\mathcal{F}_1$ and $\mathcal{F}_2$.  
Introduce the function
\begin{align}\label{f}
f(u):=\int\log(u^2+\lambda^2)d\nu_{n,z}(\lambda^2)-u^2.
\end{align}
It is easy to see that $u_*$ defined by (\ref{eq_u}) is its unique maximum point for $u\ge 0$.

The main result of the section is the following proposition:
\begin{proposition}\label{p:saddle} Given any fixed $\tilde M>0$ we have uniformly in $0<\hat\epsilon<\tilde MI_2,0<\epsilon'<\tilde M$,
$|\zeta|< \tilde M,|\zeta'|<\tilde M$
\begin{align}\label{repr_Z}
(\epsilon')^2\mathcal{Z}(\zeta,\zeta',\hat\epsilon/n, \epsilon'I_2/n)=&Cn^4 (\epsilon')^2\int e^{\mathcal{L}(F_1,F_2,\hat\zeta)}E_{*1}(\hat\epsilon,u_*+\Lambda/n^{1/2}, U)\\
&\times E_{*2}(\epsilon'I_2,u_*-iT/n^{1/2}, R_2)\, d\Lambda dTdSd\nu+O(e^{-c\log^2n}),\notag
\end{align}
with  $u_*$ defined by (\ref{eq_u}), Hermitian $2\times 2$ matrices $\Lambda$, $T$, $S$, and  $2\times 2$ matrices $\nu_l$, $\nu_l^*$, $l=1,2$ of independent Grassmann variables with  $d\nu$ of (\ref{dnu}).
Here functions $E_{*1}$, $E_{*2}$ were defined in (\ref{E_*1,2}), and
\begin{align}\label{repr_L}
&\mathcal{L}(F_1,F_2,\hat\zeta)=\Str  \log (F_1F_2\otimes I_n+\mathcal{Q}(F_2\otimes I_n)^{-1}\mathcal{Q}_*(F_2\otimes I_n))-n \,\Str F_1F_2-\Tr S^2
\end{align}
with
\begin{align}\label{F_12}
F_{l}:=&\left(\begin{array}{cc}u_*+n^{-1/2}\Lambda&n^{-1/2}\nu_{l}\\
n^{-1/2}\nu^*_{3-l}&u_*+in^{-1/2}(T\pm S)\end{array}\right),\quad l=1,2,
\end{align}
and $\mathcal{Q}$, $\mathcal{Q}_*$ were defined in (\ref{F_1,F_2}).
\end{proposition}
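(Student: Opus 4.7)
My plan is to apply a multidimensional Laplace method to (\ref{repr_fin}). By the decomposition (\ref{L_saddle}) and the fact that the Grassmann part is a polynomial of bounded degree in the finitely many Grassmann variables $\nu,\nu^*$, the large-$n$ concentration is governed entirely by the bosonic action $n(\mathcal{F}_1+\mathcal{F}_2)$ of (\ref{Phi_1})--(\ref{Phi_2}). Reducing $\mathcal{F}_1$ to scalar matrices $\Lambda=\lambda I_2$ and dropping the $O(n^{-1/2})$ contribution of $\zeta$ in $\mathcal A(\zeta)$ yields $\mathcal{F}_1=2f(\lambda)$ with $f$ as in (\ref{f}); the critical equation $f'(\lambda)=0$ has unique positive solution $u_*$ of (\ref{eq_u}), and a gradient calculation with full $\Lambda\in\herm_2^+$ identifies the unique critical point on $\herm_2^+$ as $\Lambda_*=u_*I_2$. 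An analogous computation places the critical point of $\mathcal{F}_2$ at $T_*=iu_*I_2$, $R_*=u_*I_2$, $S_*=0$: the $T$-direction reproduces (\ref{eq_u}), the $R$-direction is forced by the quadratic coupling $-n\Tr R^2-2ni\Tr RT$, and $S$ appears only quadratically with saddle at $S_*=0$.

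I would next show exponential suppression of the integrand outside a neighbourhood of radius $n^{-1/2}\log n$ of the critical point. In that neighbourhood, strict concavity of $f$ at $u_*$ (valid because $u_*>0$ by Assumption (A3)) yields a negative-definite Hessian for $\mathcal{F}_1$ and a non-degenerate Hessian of the correct indefinite signature for $\mathcal{F}_2$, giving Gaussian-type damping on scale $n^{-1/2}$. Away from the critical point the terms $-n\Tr\Lambda^2$, $-n\Tr S^2$, $-n\Tr R^2$ dominate, except in the dangerous region where an eigenvalue of $\Lambda$ (or of $T\pm S$) is near zero, where $\log\det Q_1$ (resp.\ $\log\det Q_2$) could produce a competing contribution; to rule this out one uses Assumption (A3) to keep $u_*$ bounded away from $0$, together with the technical Lemma \ref{l:1} referenced in the introduction, yielding the $O(e^{-c\log^2 n})$ error.

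Within the small neighbourhood I would perform the change of variables $\Lambda\mapsto u_*I_2+n^{-1/2}\Lambda$, $T\mapsto iu_*I_2-in^{-1/2}T$, $S\mapsto n^{-1/2}S$, together with the contour shift $R\mapsto u_*I_2-in^{-1/2}T$ (after an intermediate Gaussian integration in the fluctuation of $R$ around this value, which is absorbed into the constant $C$), and the Grassmann rescaling $\nu_l\mapsto n^{-1/2}\nu_l$ that is already encoded in (\ref{F_12}). Using the alternative form (\ref{L_fin}) of $\log\Sdet\hat Q$, the full exponent of (\ref{repr_fin}) rewrites exactly as $\mathcal{L}(F_1,F_2,\hat\zeta)$ of (\ref{repr_L}) modulo constants absorbed into $C$, and the combined Jacobians together with the $n^{12}$ prefactor and the $(\epsilon')^2$ normalization on the left-hand side produce the claimed $n^4(\epsilon')^2$. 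The main obstacle I anticipate is proving the $O(e^{-c\log^2 n})$ error bound (coming from truncating to the neighbourhood of the saddle) uniformly in $\zeta,\zeta',\hat\epsilon,\epsilon'$: one must simultaneously control the singularity of $\log\det Q_1$ near small eigenvalues of $\Lambda$, the Grassmann--bosonic coupling inside $\hat Q$, and the deformed contour integration at $\Im T=u_*I_2$, all of which rest essentially on Lemma \ref{l:1} and Assumption (A3).
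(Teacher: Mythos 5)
Your overall strategy (identify the saddle $\Lambda=u_*I_2$, $T=iu_*I_2$, $R=u_*I_2$, $S=0$, truncate to an $n^{-1/2}\log n$ neighbourhood, rescale, and absorb the Jacobians into $Cn^4$) is the same as the paper's, and your treatment of $\mathcal{F}_1$ via the function $f$ of (\ref{f}) is correct. But the steps you treat as routine are exactly where the paper's proof lives, and two of your claims are wrong. First, you never control the region of ill-conditioned $R_2$: since $\hat\zeta$ contains $\zeta'_J=J^{-1}\zeta' J$ with $J=R_2^{1/2}R^{-1/2}$, one can have $\|\zeta'_J\|\sim n^{1/2}$, so the $\zeta'$-perturbation of $\mathcal{A}$ cannot be discarded in the saddle analysis of $\mathcal{F}_2$ the way it is for $\mathcal{F}_1$. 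The paper must introduce the region $\Omega_1$ of (\ref{Omega_j}) (where $\|n^{-1/2}R_2^{-1/2}\zeta'R_2^{1/2}\|>n^{-1/2}\log n$ or $\epsilon'\Tr R_2^{-1}>\log^2 n$) and prove the separate Lemma \ref{l:b_R2}, whose proof goes back to the original $X$-integral representation and applies the Fourier transform only in the second block; nothing in your argument substitutes for this. Second, your assertion that ``$S$ appears only quadratically with saddle at $S_*=0$'' is false: in (\ref{L}) the explicit quadratic terms are $-n\Tr[R^2+2iRT+\Lambda^2]$, with no damping in $S$ at all; $S$ enters only through $\log\det Q_2$, nonlinearly, and the fact that $S=0$ maximizes $\Re\mathcal{F}_2$ is precisely the content of Lemma \ref{l:1} and Corollary \ref{cor:1}, which you instead invoke, incorrectly, to exclude small eigenvalues of $\Lambda$ or $T\pm S$ --- the lemma says nothing about that. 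Likewise ``away from the critical point the terms $-n\Tr\Lambda^2$, $-n\Tr S^2$, $-n\Tr R^2$ dominate'' has no content: there is no $-n\Tr S^2$ term, the coupling $-2in\Tr RT$ is purely oscillatory on the original contours, and $-n^{-1}\log\det Q_2$ can be large and positive when the bosonic block degenerates (this is what the region $\Omega_2$ and the shift of $T$ handle). Making this rigorous is the bulk of the paper's proof: moving the $T$-contour to (\ref{cont_T}), choosing the $R$-contours (\ref{cont_R}) depending on the eigenvalues $t_1,t_2$, and the case-by-case maximization of $K(\tau_1,\tau_2,\theta)$ in (\ref{ReK})--(\ref{ReK<0}).

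Two further items you flag as ``anticipated obstacles'' but leave unresolved are essential to the statement as written. The bound must carry the factor $(\epsilon')^2$ uniformly down to $\epsilon'\to 0$ (the paper's estimates (\ref{int_R_2}), (\ref{int_R_R_2}) and the cutoff $\epsilon'\Tr R_2^{-1}\le\log^2 n$ inside $\Omega_1$ exist precisely for this, since Section \ref{s:5} later integrates over $\epsilon'$ from $0$). And the truncation error must beat the $e^{O(n^{1/2})}$ factor coming from the linear terms $n^{1/2}(k_A\Tr\zeta^*+\bar k_A\Tr\zeta)$ in the expansion (\ref{k_A}) at the saddle; this is why the paper proves (\ref{b_Omega}) with the compensating factor $e^{\check{\mathcal{L}}}$ of (\ref{check_L}) rather than a bare $e^{-c\log^2 n}$ bound on the outside regions. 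As it stands, your proposal reproduces the saddle-point heuristics underlying the paper's argument but not a proof of the proposition.
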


  \textit{Poof of Proposition \ref{p:saddle}}.  It is easy to see that  to get (\ref{repr_Z}) from (\ref{repr_fin})  we need to prove  that $\Lambda=u_*I_2$,
  $T=iu_*I_2$, $R=-iT$ and $S=0$ is a saddle-point of $\mathcal{L}$ of (\ref{L_saddle}) and we can restrict the integration by the $O(n^{-1/2}\log n)$ neighbourhood
  of the point.
  
  Let us prove first that $\Lambda=u_*I_2$ is a saddle-point  of
$\mathcal{F}_1$ of (\ref{Phi_1}).  Since   $\mathcal{A}(\zeta)=A_z+O(n^{-1/2})$ and $\mathcal{A}^*(\zeta_U)=A_z^*+O(n^{-1/2})$ for any $U$,
we have
\begin{align*}
\mathcal{F}_{1}=\mathcal{F}_{10}+O(n^{-1/2}),\quad \mathcal{F}_{10}:=\int\log\det(\Lambda^2+\lambda^2 I_2)d\nu_{n,z}(\lambda^2)-\Tr\Lambda^2,
\end{align*}
and so one need to  study a saddle-point of $\mathcal{F}_{10}$

It is easy to see that if $\Lambda=V\mathrm{diag}\{u_1,u_2\}V^*$,
 then
\begin{align*}
\mathcal{F}_{10}=f(u_1)+f(u_2)\le 2f(u_*)
\end{align*}
where $f$ was defined in (\ref{f}).
Since $f(u)$ has only one maximum $u=u^*$ for $u\ge 0$, we obtain that $\Lambda=u_*I_2$ is a saddle-point of
$\mathcal{F}_{10}$, and so only $O(n^{-1/2}\log n)$-neighbourhood  of $u_*I_2$ can contribute to our integral. Expanding
$\mathcal{F}_{1}$ around $u_*I_2$ up to the first order, we get
\begin{align}\label{k_A}
\mathcal{F}_{1}\sim  2f(u_*)+n^{-1/2}(k_A\Tr\zeta^*+\bar k_A\Tr\zeta)+O(n^{-1})
\end{align}
with $k_A$ of (\ref{rho}).

Analysis of saddle-points of $\mathcal{F}_2$ is more involved than that of $\mathcal{F}_1$ since the structure of $\mathcal{F}_2$ is more complicated. Another difficulty comes from the fact that for some
$R,R_2$ we have $\|\zeta'_J\|\sim n^{1/2}$, and so we cannot neglect this term in the saddle-point analysis (in contrast to $\zeta_U$ appearing in $\mathcal{F}_1$). Hence, first of all we need to exclude the situation when $\|\zeta'_J\|$ is big, i.e. 
$\|\zeta'_J\|\gg \log n$.

We denote by $\omega$ the set of all integration parameters
in (\ref{repr_fin}) and consider the sets
\begin{align}\label{Omega_j}
\Omega_1=&\{\omega:\|n^{-1/2}R_2^{-1/2}\zeta' R_2^{1/2}\|>n^{-1/2}\log n\vee \epsilon'\Tr R_2^{-1}>\log^2n\},\\
\Omega_2=&\{\omega: \|R\|>M\vee \|R^{-1}\|>\delta_*^{-1}\},\quad \delta_*:=\log^{-1}n, \notag\\
\Omega_3=&\{\omega: \|T-iu_*I_2\|>n^{-1/2}\log n\vee  \|R+iT\|\ge n^{-1/2}\log n\}\cap\Omega_1^c\cap\Omega_2^c,\notag\\
\Omega_4=&\{\omega:\|S\|>n^{-1/2}\log n\}\cap\Omega_1^c\cap\Omega_2^c\cap\Omega_3^c,\notag
\end{align}
where we denote by $\Omega_j^c$ the complement of $\Omega_j$.

The assertion of Proposition \ref{p:saddle} will follow from the bounds
\begin{align}\label{b_Omega}
\epsilon'^2 e^{\check{\mathcal{L}}}\langle \mathbf{1}_{\Omega_j}\rangle\le e^{-c\log^2 n},\quad j=1,\dots, 4,
\end{align}
if they hold uniformly in  $0<\hat\epsilon<\tilde M,0<\epsilon'<\tilde M, |\zeta|<\tilde M, |\zeta'|<\tilde M$ with any fixed $\tilde M$.
Here we denote by $\langle\phi\rangle$ the integral of the form (\ref{repr_fin}) with a function $\phi$ added as a multiplier before
the exponent. We also set 
\begin{align}\label{check_L}
\check{\mathcal{L}}:=-n^{1/2}(k_A\mathrm{Str}\hat \zeta_*+\bar k_A\mathrm{Str}\hat \zeta)
\end{align}
(see (\ref{rho}) for the definition of $k_A$ and (\ref{hat_zeta}) for the definition of $\hat \zeta$ and $\hat \zeta_*$). This term appears in the expansion of $\mathcal{L}$ 
of (\ref{L_saddle}) near its saddle-point  (see, e.g. (\ref{k_A})). Since at the end of our proof after some differentiation procedure we put  $\zeta=\zeta'$,
and then get $\check{\mathcal{L}}=0$, this term is not important for us (see the discussion after formula (\ref{der_Z}) in Section \ref{s:4}).

The multiplier $\epsilon'^2$ in (\ref{b_Omega})   appears  because of $E_{*2}$ in (\ref{repr_fin}). We need to control  the  dependence of 
the bounds 
on $\epsilon'$ since in Section \ref{s:5} we need to integrate over $\epsilon'$ from $\epsilon'=0$. 

Below we use also that  if $\Re R^2>\delta_0$, then
\begin{align}\label{int_R_2}
&\Big|\int \frac{dR_2}{(\det R_2)^2}\exp\{-\epsilon'\Tr R^2R_2^{-1}-\epsilon'\Tr R_2\}\Big|\\
\le& C\int_0^\infty\int_0^\infty  d\rho_1d\rho_2(\rho_1^{-1}-\rho_2^{-1})^2\exp\{-\epsilon'(\delta_0(\rho_1^{-1}+\rho_2^{-1})+\rho_1+\rho_2)\}\le C'(\delta_0\epsilon'^{2})^{-1}.
\notag\end{align}
Here we change  $dR_2$ over the positive Hermitian matrices to the integration with respect to eigenvalues $\rho_1,\rho_2$ of $R_2$. Then Jacobian $\tfrac{\pi}{2}(\rho_1-\rho_2)^2$ divided 
by $(\det R_2)^2$ gives the multiplier $(\rho_1^{-1}-\rho_2^{-1})^2$.

Integrating first with respect to $R$ and then with respect to $R_2$, we get  the bound
\begin{align}\label{int_R_R_2}
&\int\det R\,(\Tr R)^2 dR\int \frac{dR_2}{(\det R_2)^2}\exp\{-\epsilon'\Tr R^2R_2^{-1}-\epsilon'\Tr R_2-n\Tr R^2\}\\
=& C\int_0^\infty\int_0^\infty \frac{(\rho_1-\rho_2)^2d\rho_1d\rho_2}{(n+\epsilon'\rho_1^{-1})^2(n+\epsilon'\rho_1^{-1})^2\rho_1^2\rho_2^2}
\exp\{-\epsilon'(\rho_1+\rho_2)\}\notag\\
=& C\int_0^\infty\int_0^\infty\frac{(\tilde\rho_1-\tilde\rho_2)^2d\rho_1d\rho_2}{(n\tilde \rho_1+\epsilon'^{2})^2(n\tilde \rho_2+\epsilon'^{2})^2}
\exp\{-(\tilde\rho_1+\tilde\rho_2)\}\le C(n\epsilon')^{-2}.
\notag\end{align}
Bound (\ref{b_Omega}) for $\Omega_1$ follows from the lemma
\begin{lemma}\label{l:b_R2} Denote by $I(R_2)$ the integral which we obtain if in (\ref{repr_fin}) fix $R_2$ and integrate with respect
to the rest of parameters.
Then there are some fixed positive $p_1,p_2$ such that
\begin{align}
|I(R_2)e^{\check{\mathcal{L}}}|\le &\frac{Cn^{p_1}}{(\det R_2)^{p_2}}\exp\Big\{-n\log\Big(1+\|n^{-1/2}R_2^{-1/2}\hat\zeta R_2^{1/2}\|^2/2u_*^2\Big)-\epsilon'\Tr(R_2+u_*^2R_2^{-1})\Big\}
\label{b_R2.1}\end{align}
where $\check{\mathcal{L}}$ is defined in (\ref{check_L}).
\end{lemma}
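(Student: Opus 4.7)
The plan is to substitute the integral representation (\ref{repr_fin}) with $R_2$ fixed, and to estimate separately each factor. The factor $(\det R_2)^{-2}$ and the factor $e^{-\epsilon'\Tr R_2}$ on the right-hand side of (\ref{b_R2.1}) appear explicitly in $E_{*2}(\hat\epsilon',R,R_2)$ from (\ref{E_*1,2}). What remains is to bound the integral over $\Lambda, U, T, S, R$ and the Grassmann variables $\nu$ by $n^{p_1}$ times $\exp\bigl\{-n\log(1+\|n^{-1/2}R_2^{-1/2}\hat\zeta R_2^{1/2}\|^2/(2u_*^2)) - \epsilon' u_*^2 \Tr R_2^{-1}\bigr\}$.

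First I would use the decomposition $\log \Sdet \tilde Q = \log\det Q_1 - \log\det Q_2 + \mathrm{Grassm}$ from (\ref{Q_2}). Since $Q_1$ depends only on $\Lambda, U, \zeta, \zeta_U$ and is independent of $R_2$ and $\zeta'_J$, a standard saddle-point analysis at $\Lambda = u_* I_2$ together with the $U$-integration gives a factor $e^{2nf(u_*)}$ and absorbs the leading linear $\zeta$-correction into $\check{\mathcal{L}}$, up to a Gaussian polynomial prefactor. The Grassmann integration over $\nu$ contributes at most $n^{p_1}$ for some fixed $p_1$, since the $\nu$-quadratic form in $\log\Sdet$ has bounded coefficients in the relevant saddle region. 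The essential work goes into the lower bound on $|\det Q_2|$. Applying the Schur complement with $D = i(T-S)\otimes I_n$ and evaluating at the saddle $T = iu_* I_2$, $S = 0$ yields $\det Q_2 = \det\bigl[u_*^2 I_{2n} + \mathcal{A}(\zeta'_J)\mathcal{A}^*(\zeta'_J)\bigr]$; the Gaussian fluctuations of $T'=T-iu_*I_2$ and $S$ of size $n^{-1/2}\log n$ (outside which the contribution is $e^{-c\log^2 n}$ by the machinery of Section \ref{s:3}) then produce an overall polynomial prefactor.

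The key estimate is therefore the lower bound
\[
\det\bigl[u_*^2 I_{2n} + \mathcal{A}(\zeta'_J)\mathcal{A}^*(\zeta'_J)\bigr] \;\geq\; \det[u_*^2 I_n + A_z A_z^*]^2 \cdot\bigl(1 + \|n^{-1/2}\zeta'_J\|^2/(2u_*^2)\bigr)^n,
\]
which I would prove by conjugating out $I_2 \otimes (A_z A_z^* + u_*^2 I_n)^{1/2}$, writing the result as $N_0 + K_1$ where $N_0 = I_{2n} + n^{-1}\zeta'_J(\zeta'_J)^*\otimes G$ and $K_1$ is the Hermitian cross term linear in $\zeta'_J$, and then exploiting the explicit product $\det N_0 = \prod_{j=1,2}\prod_{i=1}^n\bigl(1 + \sigma_j^2/(n(u_*^2+\lambda_i^2))\bigr)$ together with the relation $n^{-1}\Tr G = 1$, i.e.\ $\sum_i (u_*^2+\lambda_i^2)^{-1} = n$, via a Jensen-type inequality on $\log\det(I+K_2)$. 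Translating $\zeta'_J = J^{-1}\zeta' J$ back through $J = R_2^{1/2}R^{1/2}$ and using the localization $R \approx u_* I_2$ converts this into the desired $\|R_2^{-1/2}\hat\zeta R_2^{1/2}\|^2$ factor, and the residual $\epsilon'\Tr R_2^{-1}R^2 \approx \epsilon' u_*^2 \Tr R_2^{-1}$ yields the factor $e^{-\epsilon' u_*^2 \Tr R_2^{-1}}$ with error negligible outside $\Omega_1\cup\Omega_2$. The hard part will be this determinant lower bound: since $\zeta'_J$ is typically non-normal with norm potentially of order $\sqrt n$, elementary inequalities like $\det(A+B)\geq \det A$ are insufficient, and one must tie the argument to the saddle-point equation (\ref{eq_u}) in an essential way.
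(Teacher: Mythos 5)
Your proposal has two genuine gaps, and they sit exactly at the places where the real work of this lemma lies. First, the displayed ``key estimate''
\[
\det\bigl[u_*^2 I_{2n}+\mathcal A(\zeta'_J)\mathcal A^*(\zeta'_J)\bigr]\;\ge\;\det[u_*^2 I_n+A_zA_z^*]^2\bigl(1+\|n^{-1/2}\zeta'_J\|^2/(2u_*^2)\bigr)^n
\]
is false as stated: $\mathcal A(\zeta'_J)\mathcal A^*(\zeta'_J)$ contains the cross term $n^{-1/2}(\zeta'_J\otimes A_z^*+\zeta_J'^{*}\otimes A_z)$, whose first-order contribution to the log-determinant is $n^{1/2}(k_A\Tr\zeta'^*+\bar k_A\Tr\zeta')+O(1)$ and can have either sign, so the left side can drop below the right side. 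This linear term is precisely what the factor $e^{\check{\mathcal{L}}}$ in the statement of the lemma is there to cancel; any correct proof must extract it explicitly rather than bound it away. The paper does this by factorizing the relevant determinant as $Y(\cdots)Y^*$ with $Y=1+n^{-1/2}\zeta'\otimes G^{1/2}A_z^*G^{1/2}$, so that $\log\det Y+\log\det Y^*$ produces exactly the term cancelled by $\check{\mathcal{L}}$, while the remaining factor $\det\bigl(1+n^{-1}(\cdots)\otimes K_0(\cdots)\bigr)$ with $K_0=u_*^2G^{1/2}\tilde GG^{1/2}\ge 0$ yields the decay $n\log\bigl(1+\|n^{-1/2}R_2^{-1/2}\zeta' R_2^{1/2}\|^2/2u_*^2\bigr)$. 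You acknowledge at the end that this bound is ``the hard part'' and leave it unproved, so the core of the lemma is missing.

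Second, your use of ``the machinery of Section \ref{s:3}'' to localize $T$ near $iu_*I_2$, $S$ near $0$ and $R$ near $u_*I_2$ inside this proof is circular: the bounds for $\Omega_3$ and $\Omega_4$ are established only on $\Omega_1^c\cap\Omega_2^c$, i.e.\ after the regime $\|n^{-1/2}R_2^{-1/2}\zeta'R_2^{1/2}\|$ large has been excluded, and excluding that regime is exactly what Lemma \ref{l:b_R2} is for. Worse, in the regime that matters here ($\|\zeta'_J\|\sim n^{1/2}$) the perturbation $n^{-1/2}\zeta'_J$ in $\mathcal A(\zeta'_J)$ is of order one, the saddle in $T$ is genuinely shifted, and evaluating at $T=iu_*I_2$, $S=0$ with Gaussian fluctuations is not justified; the same objection applies to replacing $\epsilon'\Tr R_2^{-1}R^2$ by $\epsilon' u_*^2\Tr R_2^{-1}$ via $R\approx u_*I_2$. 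The paper avoids all of this by not working from (\ref{repr_fin}) at all: it returns to the $X,\Psi$ representation, applies the Fourier transform only in the constraint defining $R_2$, integrates $X^{(1)}$ and $X^{(2)}$ exactly (only the $\Lambda$-saddle for $\mathcal F_1$ is used, which is legitimate since it does not involve $\zeta'$), changes $T_2+i\epsilon'/n=R_\epsilon^{-1/2}\tilde TR_\epsilon^{-1/2}$, shifts the contour to $\Im\tilde T=u_*^2$, and bounds the real part of the exponent pointwise on that contour (the $u_*^2$ on the contour is also what produces $-\epsilon' u_*^2\Tr R_2^{-1}$ without any reference to $R$). To repair your argument you would need either to reproduce this kind of exact-integration scheme or to prove the determinant bound with the cross term retained and the $\check{\mathcal{L}}$-compensation built in, uniformly over the full $T,S,R$ integration domain rather than near the unperturbed saddle.
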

The proof of the lemma is given after the proof of Proposition \ref{p:saddle}.

Note that if we are in $\Omega_1^c\cap\Omega_2^c$, then
\[
\|n^{-1/2}\zeta_J\|\le M^{1/2}\delta_*^{-1/2}\|n^{-1/2}R_2^{-1/2}\hat\zeta R_2^{1/2}\|\le Cn^{-1/2}\log^2 n,
\]
and we can consider saddle-points of $\mathcal{F}_2$ of (\ref{F_1,F_2}) with $\zeta=0$  only, as we did for $\mathcal{F}_1$.

 Now let us prove (\ref{b_Omega}) for $\Omega_2$.  For $\|R\|>M$ the  bound is evident, since we have
the term $-n\Tr R^2$ at the exponent (so it is sufficient to move the integration with respect to $T$ to $T=i+T'$). 
Suppose now  $\|R^{-1}\|>\delta_*^{-1}$ which means that $r_2$,
the minimum eigenvalue of $R_2$, satisfies the bound   $r_2< \delta_*$.  Move the integration with respect to $T$ such that 
in the  basis of eigenvectors of $R>0$ we have $T_{11}=i+\mathbb{R}$ and $T_{22}\in i\delta_*^{-1}+\mathbb{R}$, $T_{12}=\bar T_{21}$.
Then
\begin{align*}
\Re\mathcal{F}_2&=-n^{-1}\Re  \log \det Q_2-\Re(2i\Tr TR+\Tr R^2)\\ &\le -\log \delta_*^{-1}+2r_1+2-r_1^2-r_2^2
\le -\log \delta_*^{-1}+8-\Tr R^2/2,
\end{align*}
and  (\ref{int_R_R_2}) (with $\Tr R^2$ replaced by $\Tr R^2/2$) gives us the uniform with respect to $\epsilon'$ bound of the type (\ref{b_Omega}) for $\Omega_2$.

\medskip

To study the contribution of $S$ to $\mathcal{F}_2$  we need one more lemma. Set
 \begin{align}\label{Omega}
 \tilde\Omega=\{t:\pi/4<\mathrm{arg} \,t<3\pi/4\}.
\end{align}  

\begin{lemma}\label{l:1}Set
\begin{align}\label{M(T,S)}
\mathcal{M}(T,S,\lambda  I_2)=\left(\begin{array}{cc}T+S&\lambda I_2\\ \lambda I_2& T-S\end{array}\right).
\end{align}
Then for any $t_1,t_2\in \partial \tilde\Omega$ with $\tilde\Omega$ of (\ref{Omega}), $T=\mathrm{diag}\{t_1,t_2\}$, $\lambda>0$  and  $S=S^*$ we have
\begin{align}\label{l1.1}
\Re \log \det \mathcal{M}(T,S,\lambda I_2)\ge \Re \log \det \mathcal{M}(T,0, \lambda  I_2).
\end{align}
\end{lemma}
\begin{corollary}\label{cor:1}
By the maximum principle  inequality (\ref{l1.1}) is valid for all $t_1, t_2\in  \tilde\Omega$.
\end{corollary}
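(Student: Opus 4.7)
The plan is to rewrite the asserted inequality as a bound on a holomorphic ratio and then apply a two-variable maximum principle (Phragm\'en--Lindel\"of for the sector $\tilde\Omega$ of opening $\pi/2$). Concretely, for fixed $S=S^*$ and $\lambda>0$, set
\begin{equation*}
h(t_1,t_2) := \frac{\det\mathcal{M}(T,0,\lambda I_2)}{\det\mathcal{M}(T,S,\lambda I_2)} = \frac{(t_1^2-\lambda^2)(t_2^2-\lambda^2)}{\det\bigl((T+S)(T-S)-\lambda^2 I_2\bigr)},
\end{equation*}
where $T=\mathrm{diag}\{t_1,t_2\}$. Taking real parts of logarithms, inequality (\ref{l1.1}) is equivalent to $|h(t_1,t_2)|\le 1$, and by Lemma \ref{l:1} we already have this bound on the distinguished boundary $\partial\tilde\Omega\times\partial\tilde\Omega$. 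The task is therefore to propagate $|h|\le 1$ from the distinguished boundary into $\tilde\Omega\times\tilde\Omega$.

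First I would verify that $h$ is actually holomorphic (i.e., the denominator has no zeros) on $\tilde\Omega\times\tilde\Omega$. Since $t_j\in\tilde\Omega$ implies $\Im t_j>0$, the Hermitian part of $iT$ equals $-\mathrm{diag}(\Im t_1,\Im t_2)<0$; moreover $iS$ is anti-Hermitian because $S=S^*$, and $i\lambda I_2$ is anti-Hermitian because $\lambda\in\mathbb{R}$. Hence the $4\times 4$ matrix $i\mathcal{M}(T,S,\lambda I_2)$ has strictly negative-definite Hermitian part and so is invertible, giving $\det\mathcal{M}(T,S,\lambda I_2)\neq 0$. Second, I would check boundedness: both numerator and denominator are polynomials in $(t_1,t_2)$ with the same top-degree term $t_1^2 t_2^2$, so $h\to 1$ as $|t_1|,|t_2|\to\infty$ in any direction and $|h|$ is uniformly bounded on $\overline{\tilde\Omega}\times\overline{\tilde\Omega}$.

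With holomorphy and boundedness in hand, I would iterate Phragm\'en--Lindel\"of: fix $t_2\in\partial\tilde\Omega$ and view $t_1\mapsto h(t_1,t_2)$ as a bounded holomorphic function on the sector $\tilde\Omega$; since $|h|\le 1$ on $\partial\tilde\Omega$ (by Lemma \ref{l:1}) and the opening of $\tilde\Omega$ is $\pi/2$, Phragm\'en--Lindel\"of gives $|h(t_1,t_2)|\le 1$ for all $t_1\in\tilde\Omega$, $t_2\in\partial\tilde\Omega$. Fixing now $t_1\in\tilde\Omega$ and repeating the same argument in the variable $t_2$ yields the desired bound on all of $\tilde\Omega\times\tilde\Omega$.

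I expect the only real technical obstacle to be the holomorphy step (i.e., ruling out zeros of $\det\mathcal{M}(T,S,\lambda I_2)$ in the interior); the negative-definiteness computation for the Hermitian part of $i\mathcal{M}$ is the structural input that makes everything else essentially automatic. The growth check for Phragm\'en--Lindel\"of is trivial since $h$ is a bounded rational function, and the sequential application in two variables is standard.
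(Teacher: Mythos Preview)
Your proposal is correct and is precisely the detailed verification the paper implicitly invokes with the phrase ``by the maximum principle.'' Since $\tilde\Omega$ is an unbounded sector, you rightly upgrade the maximum modulus principle to Phragm\'en--Lindel\"of and check the two ingredients (no zeros of the denominator in $\overline{\tilde\Omega}\times\overline{\tilde\Omega}$, via the negative-definite Hermitian part of $i\mathcal{M}$, and the polynomial growth control giving boundedness of the rational function $h$); the two-variable iteration is the standard way to pass from the distinguished boundary to the full bidomain. One small remark: your claim that ``$h\to 1$ as $|t_1|,|t_2|\to\infty$ in any direction'' is a bit loose when only one variable tends to infinity, but what your iteration actually uses---boundedness of $h(\cdot,t_2)$ on $\tilde\Omega$ for each fixed $t_2$ (and symmetrically)---follows directly from the rational structure and the non-vanishing of the denominator, so the argument goes through.
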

It follows from the corollary that for  any normal $T$ with eigenvalues $t_1,t_2\in  \tilde\Omega$
the point $S=0$ is a maximum point for $\Re \mathcal{F}_2(T,S)$, and hence the integration over $S$ can be restricted to a small neighbourhood 
of $S=0$. In particular, it implies (\ref{b_Omega}) for $\Omega_4$.

\medskip

Let us consider $R$ in the basis of eigenvectors of $T$. The complement of  $\Omega_2$ in the set of positive matrices are the matrices of the form
\begin{align*}
 R=\delta_* I_2+\tilde R, \quad \tilde R>0.
\end{align*}
For simplicity, we will omit the tilde below, i.e. change the $R$ integration $R\to \delta_* I_2+R$ with
\[
R_{11}> 0,\, R_{22}>0,\, R_{12}=\bar R_{21},\quad | R_{12}|^2\le R_{11}R_{22}.
\]
Furthermore, the integration with respect to $R$ can be replaced by the integration by $R_{11}$, $R_{22}$, $R_{12}=e^{i\phi}\theta (R_{11}R_{22})^{1/2}$ ($0<\theta<1$,
$0\le \phi<2\pi$). Then we have at the exponent 
\begin{align}\label{exp}
-n(R_{11}^2+R_{22}^2+2\theta^2 R_{11}R_{22})-2in\Tr RT-n(2\delta_*(\Tr R+i\Tr T)+ n^{-1/2}\varphi(R))
\end{align}
with some analytic function $\varphi(R)$. 

For $t_1,t_2$ which are eigenvalues of $T$ we set the integration contour  as follows
 \begin{align}\label{cont_T}
\mathcal{L}=&\mathcal{L}_0\cup \mathcal{L}_+\cup \mathcal{L}_-,\quad \mathcal{L}_0=\{t(\tau)=(-u_*^2+i\tau)^{1/2},\Im t(\tau)<M\},\quad \\
\mathcal{L}_{+}=&\{t_+(M)+\tau,\tau>0\}, \quad \mathcal{L}_{-}=\{t_-(M)-\tau,\tau>0\},
\notag \end{align}
 where  we have chosen a branch of the root such that $t(0)=iu_*$, and denote by
 $t_{+}(M),t_{-}(M)$ the points of intersection of $t(\tau)$ with the line $\Im t=M$. 
 Note that 
 \begin{align}\label{b_t^2}
 -\Re\,t^2\le 2u_*^2,\quad t\in \mathcal{L}\cap\tilde\Omega.
\end{align}  
Indeed, for $t\in \mathcal{L}_0$ we have $ -\Re\,t^2=u_*^2$ by the definition, and for $t=t_++\tau$, $0<\tau<M-\Re t_+$
\begin{align*}
|\Re t_+-M|\le \frac{u_*^2}{2M}+O(M^{-2})\quad\Rightarrow\quad -\Re\,t^2\le 2u_*^2.
 \end{align*}
Now for any $t_1,t_2\in L$ choose the contour of integration with respect to $R_{11}$ and $R_{22}$ 
 \begin{align}\label{cont_R}
R_{11}\in \mathcal{L}(t_1),\, R_{22}\in \mathcal{L}(t_2), \,\,\mathcal{L}(t)=
[0,-it-\delta_*]\cup[-it-\delta_*,+\infty),
 \end{align}
 We remark here that for any $t$ the contribution of the integral with respect to \\ $R_{jj}\in [-it-\delta_*+n^{-1/2}\log n,\infty]$ will be
 $O(e^{-c\log^2n})$, hence we do not consider it in details. It means that in all bounds below it is sufficient to consider $R_{jj}=(-it_j-\delta_*)\tau_j$,
 $\tau_j\in[0,1]$, and hence we can write
 \begin{align}\label{int_R}
\Re\{-n(R_{jj}+\delta_*)^2-2in(R_{jj}+\delta_*)t_j\}&=-n(1-\tau_j)^2\Re(-\delta_*-it_j)^2-n\Re t_j^2\\ &=
 -n\Re t_j^2(1-(1-\tau_j)^2)+O(n\delta_*).
\notag \end{align}
 For $t_1,t_2\in \mathcal{L}_+\not\in\tilde\Omega$ ($t_{1,2}=M+\delta_{1,2}+iM$, $\delta_1,\delta_2>0$)  we have the bound 
  \begin{align}\notag
n\Re\mathcal{F}_2\le&\Re\{- \log \det Q(T,S)+n K(\tau_1,\tau_2)+O(n\delta_*)\}, \\
 K(\tau_1,\tau_2,\theta)=&(\tau_1^2-2\tau_1)A_1+(\tau_2^2-2\tau_2)A_2+2\theta^2\tau_1\tau_2 B,\label{ReK}\\
  & A_1=\Re t_1^2,\quad
A_2=\Re t_2^2,\quad B=\Re t_1t_2.\notag
\end{align}
In this case
\begin{align*}
& A_1=\Re t_1^2=(M+\delta_1)^2-M^2,\quad
A_2=\Re t_2^2=(M+\delta_2)^2-M^2,\\
&B=\Re t_1t_2=(M+\delta_1)(M+\delta_2)-M^2,\notag
\end{align*}
where $\delta_1,\delta_2$ are some positive numbers. Let us check that 
 \begin{align}\label{ReK<0}
  K(\tau_1,\tau_2,\theta)\le 0.
 \end{align}
If one of $\tau_1,\tau_2$ is zero, then (\ref{ReK<0}) is evident, since $A_1,A_2$ are positive and $\tau_i^2-2\tau_i\le 0$. 
Since for $t_1,t_2\in \mathcal{L}_+\not\in\tilde\Omega$ we have $B>0$,  for $\tau_1,\tau_2>0$ the maximum in $\theta$ could be obtained only with $\theta=1$. 
Note also that for fixed $\tau_2$  $ K(\tau_1,\tau_2,1)$ is quadratic with respect to
  $\tau_1$ with a positive coefficient $A_1$ in front of $\tau_1^2$, and hence the maximum in $\tau_1$ can be achieved only at $\tau_1=0$ or $1$. By the same reason
   $\tau_2=0$ or $1$. Finally,
\[
 K(1,1,1)=-(\delta_1-\delta_2)^2\le 0,
\]
which finishes the proof of (\ref{ReK<0}). It implies
\begin{align*}
n\Re\mathcal{F}_2\le -2n\log M.
\end{align*}
Remark also that in this case we do not have the bound $\Re R^2>\delta_0$ needed for (\ref{int_R_2}), but, since we are in $\Omega_1^c$,
the integral of $E_2$ is bounded  by
\begin{align*}
& \int_{\rho_1\ge\rho_2>\epsilon'/\log^2n} d\rho_1d\rho_2(\rho_1^{-1}-\rho_2^{-1})^2\exp\{-\epsilon'(\rho_1+\rho_2)\}
 \le \frac{C}{\epsilon'}\int_{\epsilon'/\log^2n}^\infty\frac{d\rho_2}{\rho_2^2} \le C\frac{\log^2n}{\epsilon'^{2}}.
\end{align*}
Taking into account the above bound for $\Re\mathcal{F}_2$, this gives (\ref{b_Omega}).

The cases $t_1,t_2\in \mathcal{L}_-$ and $t_1\in \mathcal{L}_+,\,t_2\in \mathcal{L}_-$ ($t_1,t_2\not\in\tilde\Omega$) can be analysed similarly.

When $t_1\not\in \tilde\Omega,\,t_2\in \tilde\Omega$ ($t_1=M+\delta_1+iM$) we have
 \begin{align*}
n\Re\mathcal{F}_2\le&\Re\{- \log Q(T,S)+n K(\tau_1,\tau_2)-n t_2^2+O(n\delta_*)\} \\
 K(\tau_1,\tau_2,\theta)=&(\tau_1^2-2\tau_1)A_1+(1-\tau_2)^2A_2+2\theta^2\tau_1\tau_2B.
 \end{align*}
Again we want to check (\ref{ReK<0}). Since $A_1>0$, $A_2<0$, the non-trivial case is $B>0$. Then again one should consider $\theta=1$, since $B>0$, and  $\tau_1=0$ or $1$,  
  since $A_1>0$. If $\tau_1=0$, $\Re K(\tau_1,\tau_2,1)\le 0$. If $\tau_1=1$, it is easy to see that the maximum point in $\tau_2$ is $\tau_2=1$.  But
   setting $t_2=x+iy$ and using that $x< y\le M$, for $t_2\in \tilde\Omega$, we get
 \begin{align*}
K(1,1,1)=&-A_1+2B=-(2M\delta_1+\delta_1^2)+2((M+\delta_1)x-My)\\
 \le& -(2M\delta_1+\delta_1^2)+2\delta_1x\le -(2M\delta_1+\delta_1^2)+2\delta_1M,
 =-\delta_1^2
 \end{align*}
which finishes the proof of (\ref{ReK<0}). Then, using (\ref{b_t^2}), we obtain
\begin{align*}
n\Re\mathcal{F}_2\le&\Re\{\max_{t_{1}\in \mathcal{L}_+,t_2\in \mathcal{L}_0\cap\Omega}\{-\Re\log\det Q(T,S)-n\Re t_2^2\}\}\le \exp\{-n(\log M-2u_*^2)\}.
\end{align*}

We are left to consider the case when $t_1,t_2\in \tilde\Omega$. Since in this case $\Re \,t_1t_2<0$, one should consider only the case $\theta=0$.
 Using Lemma \ref{l:1}, we obtain
 \begin{align}
 n\Re\mathcal{F}_2\le &\Re\{-\log \det Q(T,S)-2ni\Tr TR-\Tr R^{2}+n^{1/2}\varphi(R)+O(n\delta_*)\} \notag\\
 =&\Re\{-\log \det Q(T,S)-n\Tr T^2+O(n^{1/2})+O(n\delta_* )\}\notag\\
 &\le \Re\{-\log \det Q(T,0)-n\Tr T^2+O(n^{1/2})+O(n\delta_* )\}\notag\\
 =&\Re\{n(f_1(t_1)+f_1(t_2))+O(n^{1/2})+O(n\delta_*)\},\label{in_F2} \\
 f_1(t)=&-\int\log(\lambda^2-t^2)d\nu_z(\lambda^2)-t^2.
\notag \end{align}
 Since $-t^2=u_*^2-i\tau$ ($t\in \mathcal{L}_0$),  we have evidently 
  \[\Re(f_1(t_1)+f_1(t_2))\le -2f(u_*),\]
 where $f$ was defined in (\ref{f}). Observe also that for $t_1,t_2$  sufficiently close to $iu_*$
 we can use  the first line of (\ref{int_R}), and since the first term in the r.h.s. here becomes negative for $\tau_j<1$,
 we get
  \begin{align}\label{in_F2a}
 n\Re\mathcal{F}_2\le -2nf(u_*)+O(n^{1/2}),
\end{align}
  for all $t_1,t_2,\tau_1,\tau_2,\theta $ belonging to our contours.  In addition, for $T=iu_*+T'$ with $T'=T'^*$, $\|T'\|\sim n^{-1/2}\log n$,
   expanding the functions near  $T=iu_*I_2$, we obtain
 \begin{align*}
 -\Re\log\det Q(iu_*+T',0)-n\Tr (iu_*+T')^2\le -2nf(u_*) -n c_2\Tr T'^2\le -2nf(u_*) -c_2\log^2n
 \end{align*}
 with $c_2=-f''(u_*)>0$. Besides,
 $O(n^{1/2})$  term in (\ref{in_F2a})  becomes $n^{1/2}(-k_A\Tr\zeta'^*-\bar k_A\Tr\zeta')$. Hence, using (\ref{k_A}),
 we obtain (\ref{b_Omega}) for $\Omega_3$. This  completes the proof of Proposition \ref{p:saddle}.

$\square$

\textit{Proof of Lemma \ref{l:b_R2}.}
Let us come back to our construction of the integral representation (\ref{Z_av}) and apply the matrix Fourier transform for the function
 (\ref{cond_X}) with $j=2$. We have
\begin{align*}
I(R_2)=
&Cn^{12}\int dXd\Psi dT_2\, Z(X)\tilde Z(\Psi)\\
&\exp\Big\{-n^{-1}\Tr X^{(1)}R_2X^{(1)*}+i\Tr (X^{(2)*}X^{(2)}-nR_2)T_2\\
&-n^{-1}\Big(\Tr \Psi^{(1)*}\Psi^{(1)}\Psi^{(2)*}\Psi^{(2)} +\Tr \Psi^{(1)*}X^{(1)}X^{(2)*}\Psi^{(2)}+\Tr \Psi^{(2)*} X^{(2)}X^{(1)*}\Psi^{(1)}\Big) \Big\},
\end{align*}
where $Z(X)$ and $\tilde Z(\Psi)$ were defined in (\ref{ZZE}).
Applying (\ref{Hubb-Str}) to the terms in the last line and integrating with respect to $d\Psi$,
we get
\begin{align*}
I(R_2)=
&Cn^{12}\int dXdT_2 d\Lambda dU d\nu  \, Z(X)E_{*1}(\hat\epsilon,\Lambda,U)\\
&\exp\Big\{-n^{-1}\Tr X^{(1)}R_2X^{(1)*}-(\epsilon'/n)\Tr X^{(2)*}X^{(2)}+i\Tr (X^{(2)*}X^{(2)}-nR_2)T_2\\
&+n\mathcal{F}_1(\Lambda)+n^{-1}\sum_{j,k=1}^2\Tr G(\Lambda) X^{(j)}\nu^{(j)}\nu^{(k)*}X^{(k)*}-\Tr(\nu_1\nu_1^*+\nu_2\nu_2^*)
 \Bigr\rbrace,
\end{align*}
where $\mathcal{F}_1$ is defined in (\ref{Phi_1}), $E_{*1}$ is defined in (\ref{E_*1,2}), and for $\Lambda=U\mathrm{diag}\{u_1,u_2\}U^*$ we set $G(\Lambda)=
U\mathrm{diag}\{(A^*_zA_z+u_1^2)^{-1}, (A^*_zA_z+u_2^2)^{-1}\} U^*$. Since the sum at the exponent is a quadratic form of a finite number 
of Grassman variables, we can write
\begin{align*}
\exp\Big\{n^{-1}\sum_{j,k=1}^2\Tr G(\Lambda) X^{(j)}\nu^{(j)}\nu^{(k)*}X^{(k)*}\Big\}=
\sum_{m=0}^p\frac{1}{m!}\Big(n^{-1}\sum_{j,k=1}^2\Tr G(\Lambda) X^{(j)}\nu^{(j)}\nu^{(k)*}X^{(k)*}\Big)^m
\end{align*}
with some fixed $p$.  Using a saddle-point analysis with respect to $\mathcal{F}_1$, we conclude that $G(\Lambda)$ above  can be replaced
by $I_2\otimes(A^*_zA_z+u_*^2)^{-1}+o(1)$ and $\mathcal{F}_1(\Lambda)$ by its expansion (\ref{k_A}) near the saddle-point.
Then, integrating first with respect to $X^{(1)},X^{(1)*}$ and  then with respect to $X^{(2)},X^{(2)*}$, we get
\begin{align*}
I(R_2)&=e^{2nf(u_*)+n^{1/2}(k_A\Tr\zeta^*+\bar k_A\Tr\zeta)}\tilde I(R_2),\\
\tilde I(R_2)&=Cn^{p_1}(\det R_\epsilon)^{-n}\int \mathcal{P}(X^{(2)},X^{(2)*}, R_\epsilon )dX^{(2)}dT_2E_{*1}(\hat\epsilon,\Lambda,U)\\
&\exp\{-\Tr\mathcal{A}^* (\zeta)\mathcal{A} (\zeta)X^{(2)}R_{\epsilon}^{-1}X^{(2)*}-(\epsilon'/n)\Tr X^{(2)}X^{(2)*}+i\Tr (X^{(2)*}X^{(2)}-nR_2)T_2\},\\
&=\frac{n^{p_1}}{(\det R_\epsilon)^{n}}\int dT_2\tilde{\mathcal{P}}(T_2,R_\epsilon)\\
&\qquad\times\exp\Big\{-\log\det\Big((-iT_2+\epsilon/n)\otimes I_n+\mathcal{A}^* (\zeta')(R^{-1}_\epsilon\otimes I_n)\mathcal{A} (\zeta')\Big)-in\Tr R_2T_2\Big\}. 
\end{align*} 
Here $k_A$ is defined by (\ref{rho}),  $f$ is defined by (\ref{f}), 
$R_\epsilon=R_2+\epsilon'/n$,
 $ \mathcal{P}(X^{(2)},X^{(2)*}, R_\epsilon )$ is some polynomial of $(X^{(2)},X^{(2)*})$,  and  $\tilde{\mathcal{P}}(T_2,R_2)$
 is a result of the application of the Wick theorem to $\mathcal{P}(X^{(2)},X^{(2)*}, R_\epsilon )$. Here $ \mathcal{P}(X^{(2)},X^{(2)*}, R_\epsilon )$
 and $\tilde{\mathcal{P}}(T_2,R_\epsilon)$ satisfies the bounds
\begin{align*}
& | \mathcal{P}(X^{(2)},X^{(2)*}, R_\epsilon )|\le C(\det R_\epsilon)^{-q}(n^{-1}\Tr X^{(2)*}X^{(2)})^p,\\
& |\mathcal{P}(X^{(2)},X^{(2)*}, R_\epsilon )|\le C(\det R_\epsilon)^{-q} \|\big((-iT_2+\epsilon/n)\otimes I_n+\mathcal{A}^* (\zeta)R^{-1}_\epsilon\otimes I_n\mathcal{A} (\zeta)\big)^{-1}\|^{2p}.
\end{align*}  
Changing the variables
\begin{align}\label{chT_2}
T_2+i\epsilon'/n=R_\epsilon^{-1/2}\tilde T R_\epsilon^{-1/2},
\end{align}
we get
\begin{align*}
\tilde I(R_2)= C&n^p(\det R_\epsilon)^{-q}\int \hat{\mathcal{P}}(\tilde T,R_\epsilon)d\tilde T\exp\{-\log\det(-i\tilde T\otimes I_n+\mathcal{A}^* (\zeta_R')\mathcal{A} (\zeta_R'))\\
&-in\Tr \tilde T_2-\epsilon'\Tr R_2+i\epsilon'\Tr\tilde TR_\epsilon^{-1}\},
\end{align*} 
where $ \zeta_R'=R_\epsilon^{-1/2} \zeta' R_\epsilon^{1/2}$, and $\hat{\mathcal{P}}(\tilde T,R_\epsilon)$ is a result of the change (\ref{chT_2}) in $\tilde{\mathcal{P}}(T_2,R_\epsilon)$.  Move the integration over $\tilde T$ to $iu_*^2+T'$ with $T'=(T')^*$.  Then evidently
\begin{align}\notag
\Re\{-\log\det(u_*^2-i\tilde T'\otimes I_n+&\mathcal{A}^* (\zeta_R')\mathcal{A} (\zeta_R'))+2nu_*^2-in\Tr T'-\epsilon'\Tr R_2-\epsilon'\Tr(u_*^2 -iT')R_\epsilon^{-1}\}\\
&\le -\log\det(u_*^2+\mathcal{A}^* (\zeta_R')\mathcal{A} (\zeta_R'))+2nu_*^2-\epsilon'\Tr (R_2+ u_*^2R_\epsilon^{-1})\notag \\
&-\frac{1}{2}\log\det (1+C_\zeta^2(T')^2\otimes I_n),
\label{b_Re}\end{align}
where $C_\zeta=(u_*^2+(\|A_z\|+\|n^{-1/2}\zeta_R'\|)^2)^{-1}$, and for $B=u_*^2+\mathcal{A}^* (\zeta_R')\mathcal{A} (\zeta_R')$ we  used the bound 
\begin{align*}
&\Re\{-2\log\det(B+iM)\}=-2\log\det B-\log\det (1+B^{-1/2}MB^{-1}MB^{-1/2})\\
&\le-2\log\det B-\log\det (1+C^2M^2)
\end{align*}
valid for any matrices $B>0$,  $M=M^*$, if $B^{-1}\ge C$. 

In addition, denoting $G= (u_*^2+A_z^*A_z)^{-1}$, we get
\begin{align*}
-\log\det(u_*^2+&\mathcal{A}^* (\zeta_R')\mathcal{A} (\zeta_R'))=-\log\det(u_*^2+I_2\otimes A_z^* A_z)\\
&-\log\det(1+ n^{-1/2}\zeta_R'^{*}\otimes G^{1/2}A _zG^{1/2}+n^{-1/2}\zeta_R'\otimes G^{1/2}A_z^*G^{1/2}+n^{-1}\zeta_R'^{*}\zeta_R'\otimes G).
\end{align*}
Continuing transformations, we obtain
\begin{align*}
&\tilde{\mathcal{D}}:=\log\det(1+n^{-1/2}\zeta_R'^*\otimes G^{1/2}A_z G^{1/2}+n^{-1/2}\zeta_R'\otimes G^{1/2}A_z^*G^{1/2}+n^{-1}\zeta_R'^*\zeta_R'\otimes G)\\
&=\log\det\Big((1+n^{-1/2}\zeta_R'^*\otimes G^{1/2}A_z G^{1/2})(1+n^{-1/2}\zeta_R'\otimes G^{1/2}A_z ^*G^{1/2})\\
&\hskip2cm +n^{-1}\zeta_R'^*\zeta_R'\otimes G^{1/2}(1-A_zGA^*_z)G^{1/2}\Big).
\end{align*}
To simplify formulas below, denote
\begin{align*}
K=&G^{1/2}A _z^*G^{1/2},\quad K_0=G^{1/2}(1-A_zGA_z^*)G^{1/2}=u_*^2G^{1/2}\tilde GG^{1/2},\quad\tilde G=(u_*^2+A_zA_z^*)^{-1}\\
Y=&1+n^{-1/2}\zeta'\otimes K\Rightarrow (1+n^{-1/2}\zeta_R'^*\otimes K)=(R_\epsilon^{-1/2} \otimes I_n)Y( R_\epsilon^{1/2}\otimes I_n).
\end{align*}
Then
\begin{align*}
\tilde{\mathcal{D}}=&\log\det Y+\log\det Y^*\\&
+\log\det(1+n^{-1}(R_\epsilon^{1/2}\otimes I_n) Y^{-1}(\zeta 'R_\epsilon^{-1}\zeta'^*\otimes K_0)(Y^{*})^{-1}(R_\epsilon^{1/2}\otimes I_n)).
\end{align*}
Since $n^{-1}\Tr K=k_A$, and $\|K\|\le C$ ,we get
\begin{align*}
&\log\det Y+\det Y^*=n^{1/2}(k_A\Tr\zeta'^*+\bar k_A\Tr\zeta')+O(1),\quad\\
&Y^{-1} =1+O(n^{-1/2}).
\end{align*}
Hence, using the bound $K_0\ge u_*^{-2}$, 
\begin{align*}
-\tilde{\mathcal{D}}=&-n^{1/2}(k_A\Tr\zeta'^*+\bar k_A\Tr\zeta')+O(1)-
\log\det(1+n^{-1}\zeta_R^*\zeta_R\otimes K_0+O(n^{-1/2}\|n^{-1/2}\zeta_R\|^2)\\
\le &-n^{1/2}(k_A\Tr\zeta'^*+\bar k_A\Tr\zeta')-O(1)-n\log\det(1+\|n^{-1/2}\zeta_R\|^2/2u_*^2).
\end{align*}
The   above bounds  imply
\begin{align*}
& -\log\det(u_*^2+\mathcal{A}^* (\zeta_R')\mathcal{A} (\zeta_R'))+2nu_*^2\\
 &\qquad \le -2nf(u_*)-n^{1/2}(k_A\Tr\zeta'^*+\bar k_A\Tr\zeta')
 -n\log\det(1+\|n^{-1/2}\zeta_R\|^2/2u_*^2),
\end{align*}
and this inequality combined with bound (\ref{b_Re}) yields (\ref{b_R2.1}).

$\square$

\textit{Proof of Lemma \ref{l:1}.\,}
Consider the case when $\mathrm{arg}\,t_1=\mathrm{arg}\,t_2=\pi/4$, $T=e^{i\pi/4}\mathcal{T}$, $\mathcal{T}> 0$.
Then
\begin{align*}
\det \mathcal{M}(T,S,\lambda I_2)=&\det\mathcal{T}^2\det\mathcal{M}(e^{i\pi/4},S_T,D)=
\det\mathcal{T}^2\det \Big(e^{i\pi/4}I_4+\mathcal{M}(0,S_T,D)\Big)\\
 S_T=&\mathcal{T}^{-1/2} S \mathcal{T}^{-1/2},\quad D=\lambda \mathcal{T}^{-1}.
\end{align*}
It is easy to see that if $\lambda'$ is an eigenvalue of $\mathcal{M}_0=\mathcal{M}(0,S_T,D)$ with eigenvector $(x,y)$ ($x,y\in\mathbb{C}^2$
then $(-\lambda')$ also  is  an eigenvalue of $\mathcal{M}(0,S_T,D)$ with eigenvector $(y,-x)$.
Hence
\begin{align*}
\Big|\det \Big(e^{i\pi/4}I_4+\mathcal{M}(0,S_T,D)\Big)\Big|=&\Big(\lambda_1+e^{i\pi/4})(-\lambda_1+e^{i\pi/4})
(\lambda_2+e^{i\pi/4})(-\lambda_2+e^{i\pi/4})\Big|\\
=&\Big|(\lambda_1^4+1)(\lambda_2^4+1)\Big|^{1/2}=\Big|1+\frac{1}{2}\Tr \mathcal{M}_0^4+(\det \mathcal{M}_0)^2\Big|^{1/2}.
\end{align*}
But
\begin{align*}
\frac{1}{2}\Tr \mathcal{M}_0^4=\Tr\Big( (S_T^2+D^2)^2+ [S_T,D] [S_T,D]^*\Big)>\Tr D^{4}
\end{align*}
and 
\begin{align*}
\det \mathcal{M}_0=&\det \left(  \begin{array}{cc}D&S_T\\ -S_T&D\end{array}
 \right)=\det D\det\Big(D+S_TD^{-1}S_T \Big)>\det D^2.
\end{align*}
Hence,
 \[1+\frac{1}{2}\Tr \mathcal{M}_0^4+(\det \mathcal{M}_0)^2>1+\Tr D^4+(\det D^2)^2=1+\frac{1}{2}\Tr \mathcal{M}_0^4+(\det \mathcal{M}_0)^2\Big|_{S=0},
\]
and we obtain (\ref{l1.1}).

 The case when  $\mathrm{arg}\,t_1=\mathrm{arg}\,t_2=-\pi/4$  is similar.

Consider now the case $t_1=\tau_1e^{i\pi/4},\, t_2=\tau_2e^{3i\pi/4}$ with $\tau_1>0$ and $\tau_2>0$.  In order  to simplify formulas  below we set
\[S=\left(\begin{array}{cc}s_1&c\\ \bar c& s_2\end{array}\right).
\]
Then straightforward computations yield
\begin{align}\label{l1.0}
\mathcal{D}:= &\det \mathcal{M}(T,S,\lambda I_2)=\det \Big((T+S)(T-S)-\lambda ^2\Big)\\
&=\Big(\tau_1^2\tau_2^2 +2|c|^2\tau_1\tau_2+d(S)\Big)+i\Big(\tau_2^2(s_1^2+\lambda ^2)-\tau_1^2(s_2^2+\lambda ^2)\Big)=:A+iB \notag\\
d(S):=&\det (S^2+\lambda ^2)=|c|^4+2|c|^2(\lambda ^2-s_1s_2)+(s_1^2+\lambda ^2)(s_2^2+\lambda ^2).
\notag\end{align}
Notice first that since $d(S)\ge 0$, if $\lambda =0$, then 
\begin{align*}
|\mathcal D|\ge |\det M(T,0,0)|,
\end{align*}
i.e. (\ref{l1.1}) holds. Thus, it remains to  consider the case $\lambda \ne 0$.

Consider the critical point of $|\mathcal{D}|^2$ with respect to parameters $s_1,s_2,|c|$. Differentiation with respect to
$s_1,s_2$ yields
\begin{align}\label{l1.2}
&\left\{  \begin{array}{l}d'_{s_1}A+2\tau_2^2s_1B=0\\ d'_{s_2}A-2\tau_1^2s_2B=0\end{array}
 \right.\Longrightarrow \tau_1^2s_2d'_{s_1}+\tau_2^2s_1d'_{s_2}=0,
\end{align}
where
\[
d'_{s_1}=2(s_1(s_2^2+\lambda ^2)-|c|^2s_2),\quad d'_{s_2}=2(s_2(s_1^2+\lambda ^2)-|c|^2s_1).
\]
Here we used that $\mathcal{D}\not=0$, hence $A,B$ cannot  be zeros both. The  relations imply
\begin{align}\label{l1.3}
s_1s_2\Big(\tau_1^2(s_2^2+\lambda ^2)+\tau_2^2(s_1^2+\lambda ^2)\Big)-|c|^2\Big(\tau_1^2s_2^2+\tau_2^2s_1^2\Big)=0.
\end{align}
Differentiation with respect to $|c|$  gives
\begin{align}\label{l1.4}
A'_{|c|}=4|c|(\tau_1\tau_2+|c|^2+\lambda ^2-s_1s_2)=0,
\end{align}
since $A>0$. If $c=0$, then  
\begin{align*}
|\mathcal{D}|= |(i\tau_1^2-s_1^2-\lambda ^2)(-i\tau_2^2-s_2^2-\lambda ^2)|
\ge |(i\tau_1^2-\lambda ^2)(-i\tau_2^2-\lambda ^2)|=|\det M(T,0,\lambda I_2)|.
\end{align*}
If $c\not=0$, then, combining (\ref{l1.4}) with (\ref{l1.3}), we get
\begin{align*}
&(\tau_1\tau_2+|c|^2+\lambda ^2)\Big(\tau_1^2(s_2^2+\lambda ^2)+\tau_2^2(s_1^2+\lambda ^2)\Big)-|c|^2\Big(\tau_1^2s_2^2+\tau_2^2s_1^2\Big)=0\\
\Leftrightarrow& |c|^2\lambda ^2(\tau_1^2+\tau_2^2)+(\tau_1\tau_2+\lambda ^2)\Big(\tau_1^2(s_1^2+\lambda ^2)+\tau_2^2(s_2^2+\lambda ^2)\Big)=0.
\end{align*}
But since $\tau_1>0$, $\tau_2>0$, the last relation cannot be valid for $\lambda \ne 0$.

Therefore, inequality (\ref{l1.1}) holds in any critical point of $|\mathcal D|$ with respect to $s_1, s_2$ and $c$.
Notice also, that 
\[
d(S)=\det (S^2+\lambda ^2)\ge \lambda ^2\Tr S^2,
\]
and hence, according to (\ref{l1.0}), 
\begin{align*}
|\mathcal D|^2=A^2+B^2\ge A^2\ge d(S)^2\ge \lambda ^4(\Tr S^2)^2.
\end{align*}
Thus, if $\lambda \ne 0$, $|\mathcal{D}|\to \infty$ if at least one of $|s_1|$, $|s_2|$ or $|c|$ goes to infinity. This implies
that the minimum point of $|\mathcal D|$ with respect to $s_1, s_2$ and $c$ is a finite critical point, and so
(\ref{l1.1}) holds at the minimum point, thus, holds everywhere. 
$\square$

\section{Advanced representation}\label{s:4}

Consider $\mathcal{Z}$ of (\ref{repr_Z}). In this section we make a number of changes of variables 
 transforming (\ref{repr_Z}) to a universal form which allows to prove Theorem \ref{t:1}.
 
 \begin{proposition}\label{p:univ} Given $\mathcal{Z}$ of (\ref{Z}) with $\hat\varepsilon=n^{-1}\hat\epsilon=n^{-1}\diag\{\epsilon_1,\epsilon_2\}$,
 $\hat\varepsilon'=n^{-1}\epsilon' I_2$, we have
 \begin{align}\label{der_Z_fin}
 &\partial_1  \partial_2\frac{\partial}{\partial\bar\zeta_1}\frac{\partial}{\partial\bar\zeta_2}\mathcal Z
 \Big|_{\zeta_1=\zeta_1',\zeta_2=\zeta_2'}=
 \partial_1  \partial_2\frac{\partial}{\partial\bar\zeta_1}\frac{\partial}{\partial\bar\zeta_2}\Phi
 \Big|_{\zeta_1=\zeta_1',\zeta_2=\zeta_2'},
 \end{align}
 where $\partial_1,\partial_2$ are defined by (\ref{pa_1,2}),
 \begin{align}
\Phi &(\hat\zeta,\hat\zeta',\hat\zeta^*,\hat\zeta^{'*},\hat\epsilon,\hat\epsilon')=C\int \exp\Big\{
\mathcal{L}_0(g_2u_*^2, \hat\zeta,\Lambda,U,T,S,F_0 )\Big\}\notag\\
&\times E_* (\hat\epsilon,\hat\epsilon',\chi,\chi^*,R_2,U)d\Lambda dT dSd\kappa d\kappa^* d\chi d\chi^*dU dR_2,
\label{Phi_fin} \\
&\mathcal{L}_0(u_*^2g_2, \hat\zeta,\Lambda,U,T,S,F_0 ):=-\frac{u_*^2g_2}{2} \Sdet\tilde\Delta_0^2+
\rho\,\Sdet F_0\hat\zeta_*F_0^{-1}\hat\zeta-\Tr S^2, \label{L0_fin}\\
&\tilde\Delta_0=F_{0}\ Y_2+Y_1F_{0}^{-1},\label{ti_Delta_0}\\
&F_0=1+P(X^2)+X,\quad  
\label{F_0}\\
&X=\left(\begin{array}{cc}0&\chi\\-\chi^*&0\end{array}\right),\quad 
 Y_{1}=\left(\begin{array}{cc} \Lambda&\kappa \\ \kappa^*&i( T + S) \end{array}\right),\quad 
 Y_{2}=\left(\begin{array}{cc}\Lambda&\kappa \\ \kappa^*&i( T -  S) \end{array}\right),
\label{X,Y} \end{align}
 with  $\Lambda, T, S$, and $R_2>0$ being $2\times 2$ Hermitian matrices, $U$ being a $2\times 2$ unitary matrix, and  $\chi,\chi^*,\kappa,\kappa^*$ being $2\times 2 $ matrices of independent Grassmann, and $\hat\zeta,\hat\zeta_*$ having the form
 \begin{align*}
 &\hat\zeta=\diag\{\zeta,\zeta'_{J_0}\},\quad \hat\zeta_*=\diag\{U\zeta^* U^*,\zeta'^*_{J_0}\},\quad \zeta'_{J_0}=J_0^{-1}\zeta' J_0,\quad
 J_0=R_2^{1/2}(1-\chi^*\chi)^{-1/4}.
 \end{align*}
The function $P(\lambda)$ of (\ref{F_0})  has a form
 \begin{align}\label{P_lam}
P(\lambda)=\sqrt{1+\lambda}-1,
\end{align}
 $\rho$ is defined by (\ref{rho}),  and
\begin{align}\label{E_*}
E_*(\hat\epsilon,\hat\epsilon',\chi,\chi^*,R_2,U)=&\frac{u_*^4 \,\mathrm{Tr}^2\sqrt{1-\chi^*\chi}\cdot \mathrm{Tr}^2\sqrt{1-\chi\chi^*}}{\det^2 R_2}
\exp\Big\{-u_*\Tr ( U\hat\epsilon+\hat\epsilon U^*)\sqrt{1-\chi\chi^*}\\
&-u_*\epsilon'\Tr\big(R_2+R_2^{-1}(1-\chi^*\chi)\big)\Big\}+O(n^{-1/2}).
\notag\end{align}
\end{proposition}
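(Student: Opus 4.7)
The starting point is the reduced integral (\ref{repr_Z}) from Proposition \ref{p:saddle}, where $\Lambda,T,S$ are restricted to an $O(n^{-1/2}\log n)$-neighborhood of the saddle point and $F_1,F_2$ are given by (\ref{F_12}). Using the form (\ref{L_fin}) of $\mathcal{L}$, I will Taylor-expand the super-logarithm around the saddle. Writing the argument as $M_0 + n^{-1/2} M_1$, with $M_0$ the leading part (involving $u_*^2$ and $A_z A_z^*$) and $M_1$ containing both the fluctuations of $\Lambda,T,S,\nu_1,\nu_2$ and the $\hat\zeta$-perturbations in $\mathcal{Q},\mathcal{Q}_*$, expand
\begin{equation*}
\log\Sdet(M_0+n^{-1/2}M_1)=\log\Sdet M_0+n^{-1/2}\Str(M_0^{-1}M_1)-\tfrac{1}{2n}\Str(M_0^{-1}M_1)^2+O(n^{-3/2}).
\end{equation*}
The $O(n^{1/2})$ contribution combines with $-n\Str F_1F_2$ to produce the saddle value and the linear piece $\check{\mathcal{L}}$ of (\ref{check_L}), which is holomorphic in $\hat\zeta$ and therefore is annihilated by the $\partial/\partial\bar\zeta_1\,\partial/\partial\bar\zeta_2$ in (\ref{der_Z_fin}); this is why only $g_2u_*^2$ and $\rho$ survive in $\Phi$.

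The second-order term $-\tfrac{1}{2}\Str(M_0^{-1}M_1)^2$ supplies the $O(1)$ quadratic form in the fluctuations. Evaluating the traces by cyclicity, only the scalars $g_2=n^{-1}\Tr G^2$, $k_A=n^{-1}\Tr A_zG$, $h_A=n^{-1}\Tr A_zG^2$, $f_A=n^{-1}\Tr(A_zG)^2$, and $\rho=n^{-1}u_*^2\Tr GG_*+g_2^{-1}|h_A|^2$ appear. The pieces quadratic in $(\Lambda,T,S,\nu_j)$ organize into $-\tfrac{u_*^2g_2}{2}\Sdet\tilde\Delta_0^2$, while the terms bilinear in the fluctuations and in $\hat\zeta$ combine (after completing the square in $h_A/g_2^{1/2}$) into $\rho\,\Sdet F_0\hat\zeta_* F_0^{-1}\hat\zeta$. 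The emergence of the $F_0=1+P(X^2)+X$ parametrization (\ref{F_0}) with $P(\lambda)=\sqrt{1+\lambda}-1$ is produced by a Hubbard--Stratonovich decoupling (\ref{Hub_Gr}) that trades the old Grassmann blocks $\nu_j$ for the pair $(\chi,\chi^*)$; the square root arises because the decoupling is performed at the level of $F_1F_2$ rather than $F_1,F_2$ separately, so one has to take a matrix square root on the Grassmann side.

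Next, I rescale the scalar fluctuations $\Lambda,T,S$ back by $n^{1/2}$ (this brings the $n^{-1/2}$ prefactors in $F_1,F_2$ to order one and matches the variables used in (\ref{L0_fin}), (\ref{X,Y})). Each of the three Hermitian $2\times 2$ matrices carries a Jacobian $n^{-2}$, but only $\Lambda$ and $T$ need rescaling to eat the explicit $Cn^4$ prefactor in (\ref{repr_Z}); the Grassmann rescalings contribute reciprocal Berezinians according to (\ref{Gr_change}), which cancel as needed. Afterwards, one performs a change of the matrix $R_2$-type variable compatible with the new $\chi$-dependence (so that $J_0=R_2^{1/2}(1-\chi^*\chi)^{-1/4}$ appears in $\hat\zeta'_{J_0}$), and the edge factor $E_{*2}$ takes the form shown in (\ref{E_*}) after using $u_*-iT/n^{1/2}\to u_*+\text{fluctuation}$ and expanding; the factor $E_{*1}$ is handled similarly, giving the first line of (\ref{E_*}) via the Haar measure on $U(2)$.

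\textbf{Main obstacle.} The technically heaviest step is the exact identification of $\tilde\Delta_0$ and of the constants $u_*^2g_2$, $\rho$ in front of its invariants. This requires carrying out the second-order super-Taylor expansion while simultaneously keeping track of (i) the $4\times 4$ super-block structure (fermion--boson plus the two copies indexed by $Y_1,Y_2$), (ii) the $A_z$-content hidden in $M_0^{-1}$, and (iii) the Hubbard--Stratonovich rearrangement that must decouple $A_z$ from the $(\chi,\chi^*)$ sector and at the same time generate the precise factor $(1-\chi^*\chi)$ appearing in $J_0$ and in $E_*$. Once these algebraic identities are checked, $\Phi$ depends on $A_0$ only through $g_2u_*^2$, $\rho$, and (irrelevantly) $k_A$, and (\ref{der_Z_fin}) follows by comparing the two integrals after the differentiation in $\zeta_j,\bar\zeta_j$ kills all $\zeta=\zeta'$ invariant pieces.
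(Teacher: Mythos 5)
Your overall skeleton (start from (\ref{repr_Z}), expand $\mathcal{L}$ of (\ref{repr_L}) to second order around the saddle, identify the coefficients $g_2u_*^2$ and $\rho$, complete a square in $h_A/g_2$) matches the paper, but the mechanisms you invoke at the two crucial steps are not the ones that work, and as stated they leave genuine gaps. First, the pair $(\chi,\chi^*)$ and the matrix $F_0=1+P(X^2)+X$ do \emph{not} come from a Hubbard--Stratonovich decoupling (\ref{Hub_Gr}): introducing new Grassmann fields would change the number of anticommuting variables and cannot produce (\ref{Phi_fin}), which has exactly the same count as $\nu_1,\nu_2,\nu_1^*,\nu_2^*$. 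In the paper $\chi,\chi^*,\kappa,\kappa^*$ arise from the \emph{linear} change (\ref{change_chi}), $\nu_{1,2}=\kappa\pm n^{1/2}u_*\chi$, and the square root $P(\lambda)=\sqrt{1+\lambda}-1$ arises because one then shifts the commuting variables, $\Lambda\to\Lambda'+n^{1/2}u_*P(-\chi\chi^*)$, $T\to T'-in^{1/2}u_*P(-\chi^*\chi)$ (see (\ref{ch_1})), precisely so that the dangerous $-n^{1/2}u_*^2X^2$ term in $\Delta=n^{1/2}(F_1F_2-u_*^2)$ is cancelled via $P^2+2P-X^2=0$; your ``matrix square root on the Grassmann side of a decoupling at the level of $F_1F_2$'' does not produce this. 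Relatedly, your bookkeeping of the prefactor is wrong: in (\ref{repr_Z}) the variables $\Lambda,T,S$ are already the rescaled fluctuations (cf.\ (\ref{F_12})), so there is no further $n^{1/2}$ rescaling of $\Lambda,T$ available to ``eat'' $Cn^4$; the $n^4$ is cancelled exactly by the Berezinian $Cn^{-4}u_*^{-8}$ of the Grassmann change (\ref{change_chi}), in accordance with (\ref{Gr_change}). Doing what you propose would either double count powers of $n$ or destroy the $O(1)$ structure of (\ref{L0_fin}). You also pass silently over the completion of the square: one must solve the super-matrix equation $F_0\mathcal{C}+\mathcal{C}F_0^{-1}=(g_2u_*)^{-1}(\bar h_A\hat\zeta+h_AF_0\hat\zeta_*F_0^{-1})$ (equation (\ref{eq_C})) and verify that $\mathcal{C}$ is indeed a super-matrix so that the shift (\ref{ch_2}) is legitimate; this is a nontrivial point in the paper.

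Second, your disposal of the large linear term is incorrect. The term $n^{1/2}(\bar k_A\Str\hat\zeta+k_A\Str\hat\zeta_*)$ in $\mathcal{L}_1$ (and $\check{\mathcal{L}}$ of (\ref{check_L})) is \emph{not} holomorphic in $\hat\zeta$: it contains $\Str\hat\zeta_*$, which depends on $\bar\zeta_1,\bar\zeta_2$, and $\frac{\partial}{\partial\bar\zeta_1}\Str\hat\zeta_*\neq 0$, so it is certainly not annihilated by $\frac{\partial}{\partial\bar\zeta_1}\frac{\partial}{\partial\bar\zeta_2}$; if it actually contributed it would produce factors of order $n$ and ruin the limit. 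The correct argument, which your plan is missing, is the Leibniz-rule analysis at the end of Section \ref{s:4}: (i) $\partial_1\Str\hat\zeta=\partial_2\Str\hat\zeta=0$ identically and $\partial_j\Str\hat\zeta^2\big|_{\zeta_j=\zeta_j'}=0$, so any term in which $\partial_1$ or $\partial_2$ (defined in (\ref{pa_1,2})) falls on $e^{\mathcal{L}_1}$ vanishes after setting $\zeta_j=\zeta_j'$; and (ii) the cross terms in which the $\bar\zeta$-derivatives fall on $e^{\mathcal{L}_1}$ are killed because $\partial_j\Phi\big|_{\zeta_j=\zeta_j',\bar\zeta_j=\bar\zeta_j'}=0$, which the paper derives from the invariance of $\mathcal{Z}$, $e^{\mathcal{L}_1}$ and hence $\Phi$ under the simultaneous shift $\zeta_1\to\zeta_1+\xi$, $\zeta_1'\to\zeta_1'+\xi$. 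Only after both observations does (\ref{der_Z_fin}) follow; ``the differentiation kills all $\zeta=\zeta'$ invariant pieces'' is not a substitute for this argument. Until these three points (the origin of $F_0$ and of the cancellation of $n^4$, the existence of the shift $\mathcal{C}$, and the Leibniz/translation-invariance argument for discarding $\mathcal{L}_1$) are supplied, the proposal does not constitute a proof of Proposition \ref{p:univ}.
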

\textit{Proof.}  First
let us transform $\mathcal{L}$ of (\ref{repr_L}). Using  an expression for $\mathcal{Q}$ of (\ref{F_1,F_2}), we get
\begin{align*}
\mathcal{L}(F_1,F_2,\hat\zeta)&=\Str \log \big(I_4\otimes(A_zA_z^*+u_*^2)+(F_1F_2-u_*^2)\otimes I_n\\
&+n^{-1/2}\hat\zeta\otimes A_z^*
+n^{-1/2}F_2^{-1}\hat\zeta_* F_2\otimes A_z+n^{-1}\hat\zeta F_2^{-1}\hat\zeta_* F_2\otimes I_n\big)
-n\,\Str F_1F_2-\Tr S^2.
\end{align*}
where $\hat\zeta$,  $\hat\zeta_*$ are defined in (\ref{hat_zeta}). 
Since
\[\Str \log (I_4\otimes (A_zA_z^*+u_*^2))=0,\]
 using $G$ defined in (\ref{G}), we obtain
\begin{align*}
&\mathcal{L}(F_1,F_2,\hat\zeta)=\Str \log \Big(I_4\otimes I_n+(F_1F_2-u_*^2)\otimes G \\
&+n^{-1/2}\hat\zeta\otimes A_z^*G
+n^{-1/2}F_2^{-1}\hat\zeta_* F_2\otimes A_zG+n^{-1}\hat\zeta F_2^{-1}\hat\zeta_* F_2\otimes G\Big)
-n\,\Str F_1F_2-\Tr S^2.
\end{align*}
Let us introduce new $2\times 2$ Grassmann matrices $\chi,\chi^*,\kappa,\kappa^*$ and $4\times 4$ super-matrices $X$ and $Y_{1,2}$ with the relations (\ref{X,Y})
combined with
\begin{align}\label{change_chi}
&\nu_1=\kappa+n^{1/2}u_*\chi,\quad \nu_2=\kappa-n^{1/2}u_*\chi,\quad
\nu_1^*=\kappa^*+n^{1/2}u_*\chi^*,\quad \nu_2^*=\kappa^*-n^{1/2}u_*\chi^*,\\
&d\nu_1 d\nu_2 d\nu_1^*d\nu_2^*=Cn^{-4} u_*^{-8}d\kappa d\kappa^* d\chi d\chi^*,\notag\\
  & F_{1}=u_*(1+ X)+n^{-1/2}Y_{1},\quad F_{2}=u_*(1- X)+n^{-1/2}Y_2.
\notag\end{align}
Here we used (\ref{Gr_change}) -- (\ref{Gr_shift2}).

Observe that the above  change of  Grassmann variables eliminates the factor $n^4$ in front of integral in (\ref{repr_Z}). Now set
\begin{align}\label{Delta}
\Delta:=&n^{1/2}(F_1F_2-u_*^2)=u_*Y_1+u_*Y_2-n^{1/2}u_*^2X^2+n^{-1/2}Y_1Y_2+u_*XY_2-u_*Y_1X,\\
\Delta_1:=&\hat\zeta\otimes A_z^*G
+F_2^{-1}\hat\zeta_* F_2\otimes A_zG+n^{-1/2}\hat\zeta F_2^{-1}\hat\zeta_* F_2\otimes G. \notag
\end{align}
Then 
\begin{align*}
\mathcal{L}(F_1,F_2,\hat\zeta)=\Str \log (I_4\otimes I_n+n^{-1/2}\Delta\otimes G +n^{-1/2}\Delta_1)
-n^{1/2}\Str \Delta-\Tr S^2.
\end{align*}
Take the function $P$ of (\ref{P_lam})
and  make the change of variables in (\ref{Delta}) 
\[
Y_{l}=Y_{l}'+n^{1/2}u_*P(X^2),\quad l=1,2,
\]
i.e. change (see (\ref{Gr_shift2}))
\begin{align}\label{ch_1}
\Lambda=\Lambda'+n^{1/2}u_*P(-\chi\chi^*),\quad T=T'-in^{1/2}u_*P(-\chi^*\chi).
\end{align}
Then, using  that
\begin{align*}
F_1=u_*F_{0}+n^{-1/2}Y_1',\quad
 F_2=u_*F_{0}^{-1}+n^{-1/2}Y_2'
\end{align*}
with $F_0$ of (\ref{F_0}), we obtain that $\Delta$ takes the form
\begin{align*}
\Delta=&u_*(Y_1'F_0^{-1}+F_0Y_2')+n^{1/2}u_*^2(P^2(X^2)+2P(X^2)-X^2)+n^{-1/2}Y'_1Y'_2.
\end{align*}
Because of (\ref{P_lam}),  the coefficient at $n^{1/2}$ equals to 0. Thus
\begin{align*}
\Delta=u_*(Y_1'F_0^{-1}+F_0Y_2')+n^{-1/2}Y'_1Y'_2.
\end{align*}
Hence
\begin{align}\label{Delta_0}
\Delta=&u_*(Y_1'F_{0}^{-1}+F_{0}Y_2')+O(n^{-1/2}):=u_*\Delta_0+O(n^{-1/2})\\
\mathcal{L}=&n^{1/2}\Str \Delta \Big(\frac{\Tr G}{n} -1\Big)+n^{1/2}\Str \hat\zeta\frac{\Tr A_z^*G}{n}
+n^{1/2}\Str \hat\zeta_* \frac{\Tr A_zG}{n},   \label{L_new}  \\&
+\Str \hat\zeta F_0^{-1}\hat\zeta_* F_0\frac{\Tr G}{n}-\frac{1}{2}\Str \Delta^2 \frac{\Tr G^2}{n}
-\frac{1}{n} \Str (\Delta\otimes G)\Delta_1\notag\\ 
&-\frac{1}{2}\Str (\hat\zeta^2)\frac{\Tr (A_z^*G)^2}{n}
-\frac{1}{2}\Str (\hat\zeta_*)^2 \frac{\Tr (A_zG)^2}{n}\notag\\
&
-\Str \hat\zeta F_0\hat\zeta_* F_0^{-1} \frac{\Tr  A_zGA_z^*G}{n}-\Tr S^2+O(n^{-1/2}).\notag
\end{align}
It is easy to see that
\begin{align*}
\frac{1}{n}\Str (\Delta\otimes G)\Delta_1=\Str \Delta \Big(\hat \zeta \frac{\Tr A^*_zG^2}{n}+F_0\hat\zeta_*F_0^{-1}\frac{\Tr  A_zG^2}{n}\Big)+
O(n^{-1/2}).
\end{align*}
According to (\ref{eq_u}), the coefficient at $\Str \Delta$ in (\ref{L_new})   is zero. Hence, using notations (\ref{rho}),
we get
\begin{align}\label{L.2}
\mathcal{L}=&n^{1/2}\Big(\bar k_A\Str \hat\zeta+ k_A\Str \hat\zeta_*\Big)
-\frac{\bar f_A}{2}\Str \hat\zeta^2-\frac{ f_A}{2}\Str \hat\zeta_*^2\\
&+\Big(1- \frac{\Tr A_zGA_z^*G}{n}\Big)\,\Str \hat\zeta F_{0}\hat\zeta_* F_{0}^{-1}\notag\\
&-\frac{g_2u_*^2}{2}\Str\Delta_0^2
-u_*\Str \Delta_0\Big(\bar h_A\hat \zeta +h_AF_{0}\hat\zeta_*F_{0}^{-1}\Big)-\Tr S^2.
\notag\end{align}
Notice that
\begin{align*}
&-\frac{g_2u_*^2}{2}\Str \Delta_0^2
-u_*\Str \Delta_0\Big(\bar h_A\hat \zeta +h_AF_0\hat\zeta_*F_0^{-1}\Big)
=\frac{\bar h_A^2}{2g_2}\Str \hat\zeta^2+\frac{ h_A^2}{2g_2}\Str (\hat\zeta_*)^2\\
&+\frac{ h_A\bar h_A}{g_2}\Str \hat\zeta F_{0}\hat\zeta_* F_{0}^{-1}-\frac{g_2}{2}\Str \Big(u_*\Delta_0+g_2^{-1}\big(\bar h_A\hat \zeta 
+h_AF_0\hat\zeta_*F_0^{-1}\big)\Big)^2.
\end{align*}
Let us find the matrix $\mathcal{C}$  satisfying the equation
\begin{align}\label{eq_C}
F_{0}\mathcal{C}+\mathcal{C}F_{0}^{-1}=(g_2u_*)^{-1}\Big(\bar h_A\hat \zeta +h_AF_0\hat\zeta_*F_0^{-1}\Big).
\end{align}
The standard method to solve such equations is to consider any matrix $M:\mathbb{C}^2\to\mathbb{C}^2$ as a vector 
in $\mathbb{C}^2\otimes \mathbb{C}^2$
\[M\to M^{(v)}=\sum M_{ij}e_i\otimes e_j.
\]
Then equation (\ref{eq_C}) corresponds to the equation
\begin{align}\label{eq_C1}
F\bar M=(g_2u_*)^{-1}\Big(\bar h_A\hat \zeta +h_A(F_0^{-1}\hat\zeta_*F_0)\Big)^{(v)},\quad F:=F_0\otimes I_4+I_4\otimes (F_0^{-1})^T.
\end{align}
where $(F_0^{-1})^T$ is $F_0^{-1}$-transposed. $F$ is evidently invertible,
since
\[F=2I_4\otimes I_4+\Gamma_1\otimes I_4+I_4\otimes \Gamma_2^T,
\]
where the entries of $\Gamma_1$ and $\Gamma_2$  are polynomials  of Grassmann variables with zero numerical parts.
Therefore, we get 
\[ \mathcal{C}=\frac{1}{u_*g_2}\sum_{m=0}^\infty \frac{(-1)^m(\Gamma_1\otimes I_4+I_4\otimes \Gamma_2^T)^m}{(2u_*)^{m+1}}\Big(\bar h_A\hat \zeta 
+h_AF_0^{-1}\hat\zeta_*F_0\Big).
\]
To check that $\mathcal{C}$ is a super-matrix, it suffices to check that  if $M$ is a super-matrix ,
then the vector $(\Gamma_1\otimes I_4+I_4\otimes \Gamma_2^T)M^{(v)}$ also corresponds to a super-matrix.
But  similarly to (\ref{eq_C1})
\begin{align*}
(\Gamma_1\otimes I_4+I_4\otimes \Gamma_2^T)M^{(v)}\sim \Gamma_1M+M\Gamma_2,
\end{align*}
and the r.h.s. above is a super-matrix since $\Gamma_1 M$ and $ M\Gamma_2$ are super-matrices.

Now, again using (\ref{Gr_shift1}) -- (\ref{Gr_shift2}), we make another change of variables
\begin{align}\label{ch_2}
&Y_{l}'=\tilde Y_{l}-\mathcal{C},\quad  \tilde Y_{l}=\left(\begin{array}{cc}\tilde\Lambda&\tilde\kappa\\ \tilde\kappa^*&\tilde T+(-1)^{l+1} S\end{array}\right), \quad l=1,2;\\
& \tilde\Lambda=\Lambda+\mathcal{C}_{11}, \quad \tilde T=T-i\mathcal{C}_{22},
\quad\tilde\kappa=\kappa+\mathcal{C}_{12}, \quad\tilde\kappa^*=\kappa^*+ \mathcal{C}_{21},
\notag\end{align}
where $\mathcal{C}_{ij}$ is the $ij$-th $2\times 2$-block of the $4\times 4$ super-matrix $\mathcal{C}$.
Then 
\begin{align}\label{ti_Delta}
u_*\Delta_0+g_2^{-1}\big(\bar h_A\hat \zeta +h_AF_0\hat\zeta_*F_0^{-1}\big)\to F_{0}\tilde Y_2+\tilde Y_1F_{0}^{-1}=:u_*\tilde\Delta_0,
\end{align}
and so using notation (\ref{rho}) we can write $\mathcal{L}$ of (\ref{L.2}) in the form
\begin{align*}
\mathcal{L}=&n^{1/2}(\bar k_A\Str \hat\zeta+ k_A\Str \hat\zeta_*)
-\Big(\frac{\bar f_A}{2}-\frac{\bar h_A^2}{2g_2}\Big)\Str \hat\zeta^2-\Big(\frac{f_A}{2}-\frac{ h_A^2}{2g_2}\Big)\Str (\hat\zeta_*)^2\\
&+\Big(1-\frac{\Tr A_zGA_z^*G}{n}+\frac{|h_A|^2}{g_2}\Big)\Str \hat\zeta F_{0}\hat\zeta_* F_{0}^{-1}-\frac{g_2u_*^2}{2}\Str \tilde\Delta_0^2
-\Tr S^2.\notag
\end{align*}
Then identities
\begin{align}\label{G_*.1}
A_z^*G=&G_*A_z^*, \quad \frac{\Tr A_zGA_z^*G}{n}=1-u_*^2\frac{Tr GG_*}{n},
\end{align}
combined with (\ref{eq_u}) yield
\begin{align} \notag
\mathcal{L}=&\mathcal{L}_1+\mathcal{L}_0,\\
\mathcal{L}_1=&n^{1/2}\Big(\bar k_A\Str \hat\zeta+ k_A\Str \hat\zeta_*\Big)
-\Big(\frac{\bar f_A}{2}-\frac{\bar h_A^2}{2g_2}\Big)\Str \hat\zeta^2-\Big(\frac{f_A}{2}-\frac{ h_A^2}{2g_2}\Big)\Str (\hat\zeta_*)^2\label{L_1}
\end{align}
with $\mathcal{L}_0$ of (\ref{L0_fin}).
Hence,
 \begin{align}\label{final}
\mathcal{Z}& (\hat \zeta,\hat \zeta',\hat\epsilon/n,\hat\epsilon'/n)=e^{\mathcal{L}_1}\Phi (\hat\zeta,\hat\zeta',\hat\zeta^*,\hat\zeta^{'*},\hat\epsilon,\hat\epsilon')
 \end{align}
 with $\Phi$ of (\ref{Phi_fin}).
The expression for $E_*$ follows from (\ref{E_*1,2}) and (\ref{repr_Z}), if we make here changes of variables (\ref{ch_1}), (\ref{ch_2})
and $R_2\to u_*R_2$. Indeed, these changes give us 
\begin{align*}
T=\tilde T+i\mathcal{C}_{22}-iu_* \sqrt n P(-\chi^*\chi),
\end{align*}
thus,
\begin{align*}
u_*+in^{-1/2} T=u_* \sqrt{1-\chi^*\chi}+in^{-1/2}(\tilde T+i\mathcal{C}_{22}).
\end{align*}
Similarly,
\begin{align*}
u_*+n^{-1/2}\Lambda= u_* \sqrt{1-\chi\chi^*}+n^{-1/2}(\tilde \Lambda -\mathcal{C}_{11}).
\end{align*}
We also used that
\[
\det \sqrt{1-\chi^*\chi} \cdot \det \sqrt{1-\chi\chi^*}=1.
\]
It follows from the consideration above that we need to consider
\begin{align}\label{der_Z}
 \partial_1  \partial_2\frac{\partial}{\partial\bar\zeta_1}\frac{\partial}{\partial\bar\zeta_2}e^{\mathcal{L}_1}
 \Phi (\hat\zeta,\hat\zeta',\hat\zeta^*,\hat\zeta^{'*},\hat\epsilon,\hat\epsilon')
 \Big|_{\zeta_1=\zeta_1',\zeta_2=\zeta_2'},
 \end{align}
where $\mathcal{L}_1$ is defined in  (\ref{L_1}),    $\Phi$ is defined in (\ref{Phi_fin}), and $\partial_1$, $\partial_2$ are defined in (\ref{pa_1,2}). Since
\begin{align*}
\partial_1\Str \hat\zeta=\partial_2\Str \hat\zeta=0,\quad \partial_1\Str \hat\zeta^2\Big|_{\zeta_1=\zeta_1'}=0,\quad
\partial_2\Str \hat\zeta^2\Big|_{\zeta_2=\zeta_2'}=0,
\end{align*}
we obtain zero in (\ref{der_Z})  if at least one of the derivatives $\partial_1 , \partial_2$ is applied to $e^{\mathcal{L}_1}$.
Moreover, since for any $\xi\in\mathbb{C}$
\begin{align*}
&\mathcal{Z}( \zeta_1+\xi,\zeta_2,\zeta_1'+\xi,\zeta_2',\bar\zeta_1+\bar\xi,\bar\zeta_2,\bar\zeta_1'+\bar\xi,\bar\zeta_2')
\Big|_{\zeta_1=\zeta_1',\bar\zeta_1=\bar\zeta_1'},\\
&\exp\{\mathcal{L}_1( \zeta_1+\xi,\zeta_2,\zeta_1'+\xi,\zeta_2',\bar\zeta_1+\bar\xi,\bar\zeta_2,\bar\zeta_1'+\bar\xi,\bar\zeta_2')\}
\Big|_{\zeta_1=\zeta_1',\bar\zeta_1=\bar\zeta_1'}
\end{align*}
do not depend on $\xi$, we observe that $\Phi$ possesses   the same property. Then differentiating with respect to $\Re\xi$ and $\Im\xi$
we obtain
\begin{align*}
 \partial_1\Phi (\hat\zeta,\hat\zeta',\hat\zeta^*,\hat\zeta^{'*},\hat\epsilon,\hat\epsilon')
 \Big|_{\zeta_1=\zeta_1',\bar\zeta_1=\bar\zeta_1'}=0,
\end{align*}
and similarly
\[\partial_2\Phi (\hat\zeta,\hat\zeta',\hat\zeta^*,\hat\zeta^{'*},\hat\epsilon,\hat\epsilon')
 \Big|_{\zeta_2=\zeta_2',\bar\zeta_2=\bar\zeta_2'}=0.\]
Thus we conclude that only a term in (\ref{der_Z}) which has all derivatives applied to $\Phi$ gives a non-zero contribution which implies
(\ref{der_Z_fin}).

$\square$

\section{Proof of the main results}

\subsection{Proof of Proposition \ref{p:logZ}}\label{s:5}

Denote $\mathcal{D}(\epsilon_1,\epsilon_2,)$ the l.h.s. of (\ref{p.logZ}).
We use a trivial formula
\begin{align*}
\mathcal{D}(\epsilon_1,\epsilon_2)=&\int_0^{\epsilon_1}\int_0^{\epsilon_2}\frac{\partial^2}{\partial\epsilon_1\partial\epsilon_2}
\mathbb{E}\{\log( Y(z_1)+(\epsilon_1/n)^2)\log ( Y(z_2)+(\epsilon_2/n)^2)\}d\epsilon_1d\epsilon_2\\
=&\int_0^{\epsilon_1}\int_0^{\epsilon_2}\frac{\partial^2}{\partial\epsilon_1\partial\epsilon_2}
\mathcal{Z}(\zeta,\zeta',\hat\epsilon/n, \hat\epsilon'/n)\Big|_{\epsilon_1'=\epsilon_1,\epsilon_2'=\epsilon_2}d\epsilon_1d\epsilon_2.
\end{align*}
Observe that if we denote $\mathcal{G}(z,\epsilon)=( Y(z)+(\epsilon/n)^2)^{-1}$, then
\begin{align*}
&\frac{\partial^2}{\partial\epsilon_1\partial\epsilon_2}
\mathcal{Z}(\zeta,\zeta',\hat\epsilon/n, \hat\epsilon'/n)\Big|_{\epsilon_1'=\epsilon_1,\epsilon_2'=\epsilon_2}=
\frac{\epsilon_1\epsilon_2}{n^4}E\{\Tr \mathcal{G}(z_1,\epsilon_1)\Tr\mathcal{G}(z_2,\epsilon_2)\}\\
&\le \frac{\epsilon_1\epsilon_2}{n^4}E^{1/2}\{(\Tr \mathcal{G}(z_1,\epsilon_1))^2\}E^{1/2}\{(\Tr(\mathcal{G}(z_2,\epsilon_2))^2\},
\end{align*}
and
\begin{align*}
\frac{\epsilon^2}{n^4}E\{(\Tr \mathcal{G}(z,\epsilon))^2\}=\frac{\partial^2}{\partial\epsilon_1\partial\epsilon_2}
\mathcal{Z}(0,0,\hat\epsilon/n, \epsilon I_2/n)\Big|_{\hat\epsilon=\epsilon I_2}.
\end{align*}
Using (\ref{final})  for $\mathcal{Z}(0,\hat\epsilon/n, \epsilon'I_2/n)$,
one can see that the matrix $\epsilon$ appears only in the term $E_{*1}(\hat\epsilon,\Lambda,U)$  and
\begin{align}
\frac{\partial^2}{\partial\epsilon_1\partial\epsilon_2}E_{*1}(\hat\epsilon,\Lambda,U)=u_*^2
&e^{-u_*\Tr (\Lambda U\hat\epsilon+\hat\epsilon(\Lambda U)^*)}((U\Lambda )_{11}+(\Lambda U)^*_{11})
((\Lambda U)_{22}+(\Lambda U)^*_{22}).
\label{ti-E_*}\end{align}
To integrate with respect to $U$ observe that for $\zeta=\zeta'=0$ $U$ appears in  (\ref{final}) only in $E_{*1}$.
Changing variable $U\to UD$ with $D=\mathrm{diag}\{e^{i\phi_1},e^{i\phi_2}\}$
and integrating with respect to $\phi_1,\phi_2$, we obtain
\begin{align*}
\mathcal{I}_U=&\int dU \frac{\partial^2E_{*1}(\hat\epsilon,\Lambda,U)}{\partial\epsilon_1\partial\epsilon_2}\\
=&u_*^2(2\pi)^{-2}\int dU\int_0^{2\pi} d\phi_1((e^{i\phi_1}(\Lambda U)_{11}+e^{-i\phi_1}(\Lambda U)^*_{11})e^{\epsilon_1(e^{i\phi_1}((\Lambda U)_{11}+e^{-i\phi_1}(\Lambda U)^*_{11})}\\
&\times\int_0^{2\pi} d\phi_2 (e^{i\phi_2}((\Lambda U )_{22}+e^{-i\phi_2}(\Lambda U)^*_{22})
e^{\epsilon_2(e^{i\phi_2}((\Lambda U)_{22}+e^{-i\phi_2}(\Lambda U)^*_{22})}.
\end{align*}
Then, integrating  by parts with respect $\phi_1,\phi_2$, we conclude that 
\[
\mathcal{I}_U=\epsilon_1\epsilon_2 \tilde I_U
\]
with some $\tilde I_U$ uniformly  bounded in $\epsilon_1,\epsilon_2$. Now the inequality
\begin{align*}
E_{*2}(\epsilon')=\int_0^\infty\int_0^\infty(\rho_1^{-1}-\rho_2^{-1})^2e^{-\epsilon'(\rho_1+\rho_2+u_*^2(\rho_1^{-1}+\rho_2^{-1}))}d\rho_1d\rho_2
\le C(\epsilon')^{-2}
\end{align*}
yields the bound
\[
\Big|\frac{\partial^2}{\partial\epsilon_1\partial\epsilon_2}
\mathcal{Z}(0,0,\hat\epsilon/n, \epsilon I_2/n)\Big|_{\epsilon_1=\epsilon,\epsilon_2=\epsilon}\Big|\le C,
\]
which completes the proof of Proposition \ref{p:logZ}. $\square$

\subsection{Proof of Theorem \ref{t:1}}\label{s:4.1}
To prove Theorem \ref{t:1}, take any finitely supported function $\varphi(\zeta_1,\bar\zeta_1,\zeta_2,\bar\zeta_2)$ which possesses 4 bounded 
derivatives  and  write, using (\ref{der_Z_fin}) combined with integration by parts with respect to $d\zeta_1d\bar\zeta_1d\zeta_2d\bar\zeta_2$,
\begin{align*}
&\int \varphi(\zeta_1,\bar\zeta_1,\zeta_2,\bar\zeta_2)
\Big(    \partial_1  \partial_2\frac{\partial}{\partial\bar\zeta_1}\frac{\partial}{\partial\bar\zeta_2}
 \Phi (\hat\zeta,\hat\zeta',\hat\zeta^*,\hat\zeta^{'*},\hat\epsilon,\hat\epsilon')
 \Big|_{\zeta_1=\zeta_1',\bar\zeta_1=\bar\zeta_1'}\\
 &\hskip4cm- \frac{\partial^2}{\partial\zeta_1\partial\bar\zeta_1} \frac{\partial^2}{\partial\zeta_2\partial\bar\zeta_2} \E\{\log \det Y(z_1)\log \det Y(z_2)\} 
 \Big)d\zeta_1d\bar\zeta_1d\zeta_2d\bar\zeta_2\\
 =&\int \frac{\partial^2}{\partial\zeta_1\partial\bar\zeta_1} \frac{\partial^2}{\partial\zeta_2\partial\bar\zeta_2} \varphi(\zeta_1,\bar\zeta_1,\zeta_2,\bar\zeta_2)
 \Big(\mathbb{E}\Big\{\log\det\Big( Y(z_1)+\Big(\frac{\epsilon_1}{n}\Big)^2\Big)
\log\det\Big ( Y(z_2)+\Big(\frac{\epsilon_2}{n}\Big)^2\Big)\Big\}\\
&\hskip4cm-\mathbb{E}\{\log \det Y(z_1)\log \det Y(z_2)\}\Big)d\zeta_1d\bar\zeta_1d\zeta_2d\bar\zeta_2.
\end{align*}
Since according to Proposition \ref{p:logZ} the r.h.s. here tends to 0, as $\epsilon\to 0$, we conclude that in the weak sense
\begin{align*}
&\lim_{n\to\infty} \rho_{n}(z_0+\zeta_1n^{-1/2},z_0+\zeta_2n^{-1/2})\\&\hskip2cm =
(4\pi)^{-2}\lim_{\epsilon\to 0}\lim_{n\to\infty} \partial_1  \partial_2\frac{\partial}{\partial\bar\zeta_1}\frac{\partial}{\partial\bar\zeta_2}
\Phi (\hat\zeta,\hat\zeta',\hat\zeta^*,\hat\zeta^{'*},\hat\epsilon,\hat\epsilon')\Big|_{\zeta_1=\zeta_1',\bar\zeta_1=\bar\zeta_1',\hat\epsilon=\hat\epsilon'=
 \epsilon I_2}.
\end{align*}
On the other hand,  in the case  $A_0=0$ (i.e. pure Ginibre case), $|\tilde z_0|^2=1-g_2u_*^2$ (recall that $g_2u_*^2<n^{-1}\Tr G=1$), 
$\tilde\zeta=\rho^{1/2}\zeta$, $\tilde\zeta'=\rho^{1/2}\zeta'$ we obtain the same expression (\ref{final}) for $\Phi$. Hence 
the limit which we get after differentiation in the r.h.s. of (\ref{der_Z_fin}) and then sending $\epsilon\to 0$
up to the multiplicative constant coincides with that for $A_0=0$ with parameters $\tilde z_0,\tilde\zeta$ chosen above:
\begin{align*}
&\lim_{n\to\infty}p_2(z+\zeta_1/\sqrt n,z+\zeta_2/\sqrt n)=
C_0(1-e^{-\rho|\zeta_1-\zeta_2|^2}),
\end{align*}
which coincides with (\ref{t1.1}). Here we also used Proposition \ref{p:logZ}.

The constant $C_0$ can be found from the condition
\[
\lim_{|\zeta_1-\zeta_2|\to\infty} \lim_{n\to\infty}p_2(z+\zeta_1/\sqrt n,z+\zeta_2/\sqrt n)=\rho^2
\]
which concludes the proof of Theorem \ref{t:1}.

\section{Appendix}

\begin{lemma}\label{A = B^2}
For any integrable function $f$ we have
\begin{equation} \label{changeLem}
\int\limits_{\herm_2^+}  f(A) \,dA= 4 \int\limits_{\herm_2^+} (\Tr B)^2 \det B \cdot  f(B^2)\,dB.
\end{equation}
\end{lemma}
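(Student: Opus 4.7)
The plan is to prove the identity by passing to the spectral decomposition on both sides and comparing the Jacobians. Since the statement is really a change of variables $A = B^2$ on $\mathcal{H}_2^+$, the cleanest route is to integrate out the unitary angular part and reduce everything to the two positive eigenvalues.

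First I would recall the standard polar decomposition for $2\times 2$ positive Hermitian matrices. Writing $B = U \diag(\beta_1,\beta_2) U^*$ with $\beta_1,\beta_2>0$ and $U\in U(2)/U(1)^2$, the pushforward of Lebesgue measure on the three independent entries of $B$ is
\begin{equation*}
dB = c\,(\beta_1-\beta_2)^2\, d\beta_1\, d\beta_2\, d\tilde U,
\end{equation*}
where $c$ is a universal constant and $d\tilde U$ is the induced Haar-type measure on the coset space. The identical formula holds for $A = U\diag(\alpha_1,\alpha_2)U^*$ with the Vandermonde $(\alpha_1-\alpha_2)^2$.

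Next I would perform the change $\alpha_i = \beta_i^2$ in the eigenvalue integral. Then $d\alpha_1 d\alpha_2 = 4\beta_1\beta_2\, d\beta_1 d\beta_2$ and the Vandermonde factorizes as
\begin{equation*}
(\alpha_1-\alpha_2)^2 = (\beta_1^2-\beta_2^2)^2 = (\beta_1-\beta_2)^2(\beta_1+\beta_2)^2.
\end{equation*}
Combining gives an overall factor $4\beta_1\beta_2(\beta_1+\beta_2)^2 (\beta_1-\beta_2)^2 = 4\det B\cdot(\Tr B)^2\cdot(\beta_1-\beta_2)^2$, and the last factor together with $d\beta_1 d\beta_2 d\tilde U$ reassembles into $dB$. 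Since $f(A) = f(B^2)$ is unitarily conjugated the same way on both sides, the $U$-integral cancels out, yielding \eqref{changeLem}.

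There is essentially no obstacle here beyond bookkeeping: the only point to be slightly careful about is the set of measure zero where $\beta_1 = \beta_2$ (the coset space $U(2)/U(1)^2$ degenerates there), but this does not affect the integral. One could alternatively avoid the Weyl formula entirely by computing the Jacobian of the entrywise map $B \mapsto B^2$ directly via the identity $dB\cdot B + B\cdot dB = dA$, i.e.\ $L_B + R_B$ acting on the tangent space $\mathcal{H}_2$; its determinant on the three-dimensional real space of Hermitian $2\times 2$ matrices equals $2(\beta_1+\beta_2)\cdot 2\beta_1\cdot 2\beta_2/(\beta_1+\beta_2)$ after a short computation, which collapses to $4(\Tr B)^2\det B$ and reproduces the same factor. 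Either route gives \eqref{changeLem}.
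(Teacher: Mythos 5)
Your main argument is correct and is essentially the paper's own proof: diagonalize $A=V^*MV$ with the Weyl/Vandermonde factor $(\mu_1-\mu_2)^2$, substitute $\mu_j\to\mu_j^2$ to produce $4\mu_1\mu_2(\mu_1+\mu_2)^2=4\det B\,(\Tr B)^2$, and reassemble the remaining factors into $dB$. One small caveat on your aside: $\herm_2$ is $4$-dimensional over $\mathbb{R}$, and the determinant of $L_B+R_B$ there is $(2\beta_1)(2\beta_2)(\beta_1+\beta_2)^2=4\det B\,(\Tr B)^2$, not the intermediate expression you wrote (which evaluates to $8\beta_1\beta_2$); the conclusion of that alternative route is nonetheless the right one.
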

\begin{proof}
Let us change the variables $A = V^*MV$, where $V$ is a unitary matrix and $M = \diag \lbrace \mu_1, \mu_2 \rbrace$, $\mu_1, \mu_2 > 0$. The Jacobian is $2\pi(\mu_1 - \mu_2)^2$ (see e.g.\ \cite{Hu:63}). We get
\begin{equation*}
\int\limits_{\herm_2^+}  f(A) dA= 2\pi \int (\mu_1 - \mu_2)^2  f(V^*MV)\, dM dV.
\end{equation*}
Then change the variables $\mu_j \to \mu_j^2$, $j = 1,2$. It follows
\begin{equation*}
\int\limits_{\herm_2^+}  f(A) dA= 8\pi \int (\mu_1 + \mu_2)^2(\mu_1 - \mu_2)^2\mu_1\mu_2 f(V^*M^2V)\,dM dV.
\end{equation*}
Finally, we do the reverse change $B = V^*MV$ and get \eqref{changeLem}.
\end{proof}
\begin{lemma}\label{W = Lambda U}
For any integrable function $f$ we have
\begin{equation} \label{changeLem1}
\int  f(W) dW^*dW= 2\pi^3 \int\limits_{\herm_2^+} (\Tr \Lambda)^2 \det \Lambda d\Lambda \int\limits_{U(2)} dU\, f(\Lambda U),
\end{equation}
where the first integral is over the space of $2 \times 2$ matrices with complex entries.
\end{lemma}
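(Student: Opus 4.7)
The plan is to use the polar decomposition $W=\Lambda U$ with $\Lambda=(WW^*)^{1/2}\in\herm_2^+$ and $U=\Lambda^{-1}W\in U(2)$, a bijection on the open dense set of invertible $W$, and to compute the Jacobian of $(\Lambda,U)\mapsto\Lambda U$ directly. The strategy is to reduce by symmetry to a diagonal $\Lambda$ and then compute the local Jacobian explicitly in matrix-entry coordinates.

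By right invariance of $dW^*dW$ under $W\mapsto WV_0$, the Jacobian $J(\Lambda,U)$ is independent of $U$, so $J=J(\Lambda)$. Left invariance under $W\mapsto V_0W$ sends $(\Lambda,U)\mapsto(V_0\Lambda V_0^*,\,V_0U)$; combined with the conjugation invariance of $d\Lambda$ and the Haar invariance of $dU$, this gives $J(\Lambda)=J(V_0\Lambda V_0^*)$, so $J$ depends only on the eigenvalues of $\Lambda$. It therefore suffices to compute $J$ at $\Lambda=\diag\{\lambda_1,\lambda_2\}$ with $\lambda_1,\lambda_2>0$.

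At such a diagonal $\Lambda$, I would introduce the antiHermitian $1$-form $\omega=dU\cdot U^*$ and write $dW=(d\Lambda+\Lambda\omega)U$. Since right multiplication by $U$ is an isometry of $M_2(\mathbb{C})$, the required Jacobian equals that of the linear map $(d\Lambda,\omega)\mapsto Y:=d\Lambda+\Lambda\omega$. Parametrizing
\[
d\Lambda=\begin{pmatrix}d\lambda_1 & dz \\ d\bar z & d\lambda_2\end{pmatrix},\qquad
\omega=\begin{pmatrix}id\alpha_1 & d\beta \\ -d\bar\beta & id\alpha_2\end{pmatrix},
\]
one reads off that the diagonal entries $Y_{jj}=d\lambda_j+i\lambda_j\,d\alpha_j$ contribute a factor $\lambda_1\lambda_2$, while the off-diagonal block satisfies
\[
\begin{pmatrix}Y_{12}\\ \overline{Y_{21}}\end{pmatrix}=\begin{pmatrix}1 & \lambda_1 \\ 1 & -\lambda_2\end{pmatrix}\begin{pmatrix}dz\\ d\beta\end{pmatrix},
\]
with complex determinant $-(\lambda_1+\lambda_2)$, hence real Jacobian $(\lambda_1+\lambda_2)^2$. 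Multiplying gives $J(\Lambda)=\lambda_1\lambda_2(\lambda_1+\lambda_2)^2=\det\Lambda\cdot(\Tr\Lambda)^2$, which is the Jacobian in the statement.

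The only remaining point is to fix the overall constant $2\pi^3$. The computation above relates $dW^*dW$ to $d\Lambda\cdot d\omega$ with $d\omega$ the Lebesgue measure on antiHermitian matrices; passing from $d\omega$ to the Haar measure $dU$ produces the factor $2\pi^3$, equal to the Lebesgue volume of $U(2)$ in the relevant normalization. I would pin this constant down unambiguously by testing the proposed identity on $f(W)=e^{-\Tr W^*W}$: the left-hand side is a product of four standard one-dimensional complex Gaussians, while on the right $\Tr(\Lambda U)(\Lambda U)^*=\Tr\Lambda^2$ makes the $U$-integration trivial and reduces the rest to an elementary integral over $\herm_2^+$ via the diagonalization used in Lemma \ref{A = B^2}. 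I expect this bookkeeping of normalizations to be the only delicate step.
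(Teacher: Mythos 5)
Your route is genuinely different from the paper's. The paper obtains \eqref{changeLem1} by quoting Hua twice: first the singular value decomposition $W=V^*MU$ with Jacobian $4\pi^4(\mu_1^2-\mu_2^2)^2\mu_1\mu_2$, then the reassembly $\Lambda=V^*MV$ with Jacobian $\left(2\pi(\mu_1-\mu_2)^2\right)^{-1}$, and finally the Haar shift $U\to VU$. You instead work directly with the polar decomposition $W=\Lambda U$ and compute the Jacobian from the Maurer--Cartan form $\omega=dU\,U^*$, after reducing to diagonal $\Lambda$ by the left/right invariance argument. That reduction is correct, and your local computation is right: the diagonal entries give $\lambda_1\lambda_2=\det\Lambda$ and the off-diagonal block gives $(\lambda_1+\lambda_2)^2=(\Tr\Lambda)^2$, so you recover the density $(\Tr\Lambda)^2\det\Lambda$ without any appeal to Hua. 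This is a more self-contained derivation of the nontrivial part of the lemma.

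The gap is exactly the step you defer, the overall constant, and your tentative identification of it is not right as stated. If you carry out your own Gaussian test with the natural normalizations (Lebesgue $dW^*dW$ and $d\Lambda$, probability Haar $dU$), the left side of \eqref{changeLem1} for $f(W)=e^{-\Tr W^*W}$ equals $\pi^4$, while
\begin{equation*}
\int_{\herm_2^+}(\Tr\Lambda)^2\det\Lambda\, e^{-\Tr\Lambda^2}\,d\Lambda=\frac{\pi}{4},
\end{equation*}
so the test forces the constant $4\pi^3$, not $2\pi^3$. The same factor shows up intrinsically in your coordinates: the measure $d\alpha_1\,d\alpha_2\,d\Re\beta\,d\Im\beta$ attached to $\omega$ is a Haar measure of total mass $4\pi^3$, since the invariant metric is $\Tr\,\omega\omega^*=\alpha_1^2+\alpha_2^2+2|\beta|^2$ and the Riemannian volume of $U(2)$ for $\Tr(dU\,dU^*)$ is $8\pi^3$; so ``the Lebesgue volume of $U(2)$ in the relevant normalization'' is not $2\pi^3$. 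Thus you cannot simply expect the bookkeeping to confirm the stated prefactor: either you must state explicitly the (non-probability) normalization of $dU$ under which $2\pi^3$ is correct --- the paper's constant rests on the normalization conventions implicit in Hua's Jacobians --- or your method yields $4\pi^3$ with probability Haar measure. Since the constant is part of the statement, this has to be pinned down rather than announced; note, for perspective, that in the paper's use of the lemma such a multiplicative constant only enters the overall prefactor of the integral representation, and the final normalization in Theorem \ref{t:1} is fixed at the end by the large-separation condition, so the substantive content of your argument (the density $(\Tr\Lambda)^2\det\Lambda$) is the part that matters and is correct.
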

\begin{proof}
Let us change the variables $W = V^*MU$, where $U$ and $V$ are unitary matrices and $M = \diag \lbrace \mu_1, \mu_2 \rbrace$, $\mu_1, \mu_2 > 0$. The Jacobian is $4\pi^4(\mu_1^2 - \mu_2^2)^2 \mu_1\mu_2$ (see e.g.\ \cite{Hu:63}). We get
\begin{equation*}
\int dW^*dW\, f(W) = 4\pi^4 \int (\mu_1^2 - \mu_2^2)^2 \mu_1\mu_2  f(V^*MU) \,dM dU dV.
\end{equation*}
Then change the variables $\Lambda = V^*MV$ with a Jacobian $\left(2\pi(\mu_1 - \mu_2)^2\right)^{-1}$ (see e.g.\ \cite{Hu:63}). It follows
\begin{equation*}
\int dW^*dW\, f(W) = 2\pi^3 \int (\mu_1 + \mu_2)^2\mu_1\mu_2 d\Lambda dU f(\Lambda V^*U).
\end{equation*}
A Haar measure is invariant w.r.t. shifts. Therefore, after the shift $U \to VU$ we immediately obtain \eqref{changeLem1}.
\end{proof}

{\bf Statements and Declarations.} 

\textbf{Funding:}  M.S. was supported in part by  Deutsche Forschungsgemeinschaft (DFG) grant SFB 1283/2 2021 -- 317210226 and
 the Robbert Dijkgraaf Member Fund at the Institute for Advanced Study.
T.S. was supported in part by Alfred P. Sloan Foundation grant FG-2022-18916. 

\textbf{Data Availability:} Data sharing is not applicable to this article as no datasets were generated or analysed during the current study.

\end{document}